\newcommand{\eg}{e.g.\xspace}
\newcommand{\ie}{i.e.\xspace}
\newcommand{\iid}{i.i.d.\xspace}
\newcommand{\ones}{\mathbf 1}
\newcommand{\reals}{{\mathbb{R}}}
\newcommand{\integers}{{\mathbb{Z}}}
\newcommand{\supp}{{\rm supp}}
\newcommand{\eexp}{{\rm e}}
\newcommand{\diff}{{\rm d}}
\newcommand{\rank}{\mathop{\sf rank}}
\newcommand{\diag}{\mathop{\text{diag}}}
\newcommand{\Expect}{\mathbb{E}}
\newcommand{\Prob}{\mathbb{P}}
\newcommand{\argmin}{\mathop{\rm argmin}}
\newcommand{\Th}{{^{\rm th}}}
\newcommand{\lunder}[1]{{\underset{\raise0.3em\hbox{$\smash{\scriptscriptstyle-}$}}{#1}}}
\newcommand{\norm}[1]{\left\|{#1} \right\|}
\newcommand{\lnorm}[2]{\left\|{#1} \right\|_{{#2}}}
\newcommand{\Fnorm}[1]{\lnorm{#1}{\rm F}}
\newcommand{\fnorm}[1]{\|#1\|_{\rm F}}
\newcommand{\opnorm}[1]{\|#1\|_{\rm op}}
\def\innergetnumber#1[#2]#3{#2}
\def\getnumber{\expandafter\innergetnumber\jobname}
\newcommand{\bszero}{{\boldsymbol{0}}}
\newcommand{\calB}{{\mathcal{B}}}
\newcommand{\calF}{{\mathcal{F}}}
\newcommand{\calM}{{\mathcal{M}}}
\newcommand{\comp}[1]{{#1^{\rm c}}}
\newcommand{\pth}[1]{\left( #1 \right)}
\newcommand{\qth}[1]{\left[ #1 \right]}
\newcommand{\sth}[1]{\left\{ #1 \right\}}
\newcommand{\KL}[2]{D(#1 \, || \, #2)}
\newcommand{\bel}{\begin{eqnarray}\label}
\newcommand{\eel}{\end{eqnarray}}
\newcommand{\bes}{\begin{eqnarray*}}
\newcommand{\ees}{\end{eqnarray*}}
\def\benu{\begin{enumerate}}
\def\eenu{\end{enumerate}}
\def\argmin{\mathop{\rm arg\, min}}
\def\complex{\mathop{{\rm I}\kern-.58em\hbox{\rm C}}\nolimits}
\def\diag{\hbox{diag}}
\def\supp{\hbox{supp}}
\newcommand{\R}{{\mathbb{R}}}
\def\Ahat{\widehat{A}}
\def\calB{{\cal B}}\def\Bhat{\widehat{B}}
\def\calF{{\cal F}}
\def\calM{{\cal M}}
\def\rhat{\widehat{r}}
\def\Yhat{\widehat{Y}}\def\Ytil{{\widetilde Y}}
\def\Ztil{{\widetilde Z}}
\def\tkappa{\widetilde{\kappa}}
\def\lam{\lambda}
\def\trho{\widetilde{\rho}}
\def\sigmahat{\widehat{\sigma}}\def\hsigma{\widehat{\sigma}}
\def\tsigma{\widetilde{\sigma}}
\newcommand{\col}[2]{{#1}_{* {#2}}}
\newcommand{\row}[2]{{#1}_{{#2} *}}
\newcommand{\sqnorm}[1]{\| #1 \|_{{\rm s}_q}}
\newcommand{\wh}{\widehat}
\newcommand{\wt}{\widetilde}
\theoremstyle{plain}
\newtheorem{theorem}{Theorem}
\newtheorem*{theorem*}{Theorem}
\newtheorem{proposition}{Proposition}
\newtheorem{lemma}{Lemma}
\theoremstyle{definition}
\newtheorem{definition}{Definition}
\newtheorem{condition}{Condition}
\newtheorem{remark}{Remark}
\title{Adaptive Estimation in Two-way Sparse Reduced-rank 
Regression\footnote{
The authors would like to thank Dr. Kun Chen for kindly sharing his code on the exclusive extraction algorithm which we have used in \prettyref{sec:num} of the paper.
An earlier version of the present paper \citep{MaSun14} under the title ``Adaptive sparse reduced-rank regression'' studied a one-way sparse reduced-rank regression model, which can be viewed as a special case of the model considered in this paper.
The earlier version has been 
uploaded on arXiv, but is not intended for publication.} 
}
\author{
Zhuang Ma\thanks{Department of Statistics,
University of Pennsylvania. Email: {\ttfamily zhuangma@wharton.upenn.edu}.}, ~~~
Zongming Ma\thanks{Department of Statistics,
University of Pennsylvania. Email: {\ttfamily zongming@wharton.upenn.edu}.
}
~~~and~~~Tingni Sun\thanks{
Department of Mathematics, University of Maryland, College Park.
Email: {\ttfamily tingni@math.umd.edu}. 
}
\\
}
\date{ }
\begin{document}
\maketitle

\begin{abstract}
This paper studies the problem of estimating a large coefficient matrix in a multiple response linear regression model when the coefficient matrix could be both of low rank and sparse in the sense that most nonzero entries concentrate on a few rows and columns.
We are especially interested in the high dimensional settings where the number of predictors and/or response variables can be much larger than the number of observations.
We propose a new estimation scheme, which achieves competitive numerical performance
and at the same time allows fast computation.
Moreover, we show that (a slight variant of) the proposed estimator achieves near optimal non-asymptotic minimax rates of estimation under a collection of squared Schatten norm losses simultaneously by providing both the error bounds for the estimator and minimax lower bounds. 
The effectiveness of the proposed algorithm is also demonstrated on an \textit{in vivo} calcium imaging dataset.

\bigskip

\noindent\textbf{Keywords}: 
Adaptive estimation,
dimension reduction,
group sparsity, 
high dimensionality,
low rank matrices,
minimax rates, 
neuroimaging,
variable selection.
\end{abstract}

\section{Introduction}
\label{sec:intro}

High dimensional sparse linear regression has been one of the central topics of high dimensional statistical inference. 
When the response is univariate, researchers have developed a dazzling collection of tools to take advantage of the potential sparsity of the regression coefficients, \eg, Lasso \citep{Tibs96,Chen98}, SCAD \citep{FanLi01}, Dantzig selector \citep{Candes07}, MCP \citep{Zhang10}, etc.
In contemporary applications, we routinely face multivariate or even high dimensional response variables together with a large number of predictors, while the sample size can be much smaller.
For example, in a cognitive neuroscience study, \citet{Vounou12} used around ten thousand voxels from fMRI imaging as the response variables for each subject, and over four hundred thousand SNPs (single-nucleotide polymorphisms) as predictors. In comparison, the sample size was just several hundred.

Let $n$ denote the sample size, $m$ the number of responses, and $p$ the number of predictors.
We observe a pair of matrices $Y$ and $X$ from the following linear model
\begin{equation}
\label{eq:model}
Y = XA + Z,
\end{equation}
where $Y$ is an $n \times m$ response matrix, $X$ is an $n \times p$ design matrix, $A$ is a $p \times m$ coefficient matrix that we are interested in estimating, and $Z$ is an unobserved $n \times m$ matrix with \iid~noise entries.
Thus, the $i\Th$ rows of $Y$ and $X$ collect the measurements of the response and the predictor variables on the $i\Th$ subject, respectively.
When either the number of predictors $p$ or the number of response variables $m$ is large, it is hard to estimate the coefficient matrix $A$ accurately unless certain structural assumption is imposed so that its intrinsic dimension is low.

In the literature, researchers have considered several important types of structural assumptions.
One is \emph{low-rankness} where the rank of $A$ is assumed to be much smaller than its matrix dimensions $p$ and $m$. 
Model \eqref{eq:model}
with such a structure has been referred to as reduced-rank regression and has been widely used in econometrics. 
See, for instance, \citet{Izenman75}, \citet{Reinsel98} and the references therein.
The other is \emph{sparsity} where a large number of entries in the coefficient matrix are zeros. 
One may consider several different types of sparsity depending on the application problem one has in mind.
If only $s$ out of the $p$ rows in $A$ have non-zero entries, it is called \emph{row sparsity}. In other words, only a small subset of size $s$ out of the $p$ predictors contribute to the variation of $Y$. Structures of this kind arise naturally in the context of multi-task learning \citep{Koltchinskii11}. 
It can also be viewed as a leading example of \emph{group sparsity} \citep{Yuan06}, where the rows of $A$ form natural groups. 
If only $k$ out of the $m$ columns in $A$ have non-zero entries, it is called \emph{column sparsity}. 
In this case, only $k$ out of the $m$ response variables are affected by the predictors under consideration. 

In this paper, we are interested in the situation where low-rankness, row sparsity and column sparsity could be present in the coefficient matrix simultaneously.
In what follows, we refer to model \eqref{eq:model} with these structures as the \emph{two-way sparse reduced-rank regression} model. 
The interest in such a model
comes from both applications and theory, and has risen significantly in recent years.
In applications such as genomics and neurosciences, researchers can now measure a lot of response and predictor variables and so the size of the coefficient matrix is ever increasing. 
Thus, imposing both low-rankness and two-way sparsity leads to enhanced interpretability and hence can be more attractive than simply imposing one type of structure.
For instance, \cite{ma2014learning} conducted a case study of regulatory relationships between different genome-wide measurements, in which the predictors are micro-RNA measurements and the response variables are gene expression levels.
The sparsity results from the fact that a relatively small number of micro-RNAs regulated a small collection of genes under the specific experiments of interest, and the low-rankness assumption is reasonable since only a handful of regulatory programs were present. 
For estimating the coefficient matrix in this model, several algorithms have been introduced. 
See, for instance, \cite{ChenChan12} and \cite{ma2014learning}.
However, to the best of our limited knowledge, there is no theoretical guarantee on the performance of these procedures in the high dimensional regime where the number of predictors and/or response variables exceeds the sample size.
%

\paragraph{Main contributions}
The main contributions of the present paper are two-folded. 
On one hand, we propose a new computationally efficient estimator for the coefficient matrix in \eqref{eq:model} that could take advantage of the potential presence of low-rankness and sparsity adaptively.
The new estimator shows competitive numerical performance under a variety of simulation settings when compared with state-of-the-art methods.
We also demonstrate how the estimation scheme can play a critical role in analyzing the spatial-temporal structure in calcium imaging data.
On the other hand, we obtain new minimax estimation rates of the coefficient matrix with respect to a large class of squared Schatten norm losses and show that (a slight variant of) our estimator can achieve the near optimal rates adaptively for this large collection of loss functions simultaneously when the noise terms are homoscedastic and Gaussian.

\paragraph{Connection to the literature}
When the coefficient matrix is \emph{either} sparse \emph{or} of low rank, researchers have obtained deep understanding on how the optimal mean squared estimation/prediction error depends on the model parameters and on how to achieve near optimal error rates without knowing the true rank or sparsity. 
See, for instance, \citet{Bunea11rank} for the low rank case, and \citet{Huang10} and \citet{Lounici11} for the row sparse case.



In addition, researchers have performed extensive study of the case where both low-rankness and row sparsity are present.
\citet{Chen12} proposed a weighted rank-constrained group Lasso approach with two heuristic numerical algorithms and studied its fixed dimension large sample asymptotics.
\citet{Bunea12} derived oracle inequalities and studied the minimax rates under squared prediction error loss for this model in the high dimensional setting. 
See also 
\citet{she2014selectable} and 
an earlier version of the present paper \citep{MaSun14}.

The line of work that is closest to the present paper includes two recent papers: \cite{ChenChan12} and \cite{ma2014learning}. 
The main focus of these two papers was on methodology.
In comparison, the present paper not only proposes a new method but also justifies its practical effectiveness by both numerical and theoretical studies.
From a slightly different perspective, a series of papers have considered the problem of sparse SVD \citep{lee10,Yang11,yang2014rate}, which can be viewed as a special case of two-way sparse reduced-rank regression with orthogonal design.


\paragraph{Organization} 
The rest of the paper is organized as follows.
\prettyref{sec:method} presents our new methodology for 
obtaining a simultaneously sparse and low rank estimator of
the coefficient matrix.
Its competitive numerical performance is demonstrated in \prettyref{sec:num} through both simulated and real data examples.
In \prettyref{sec:theory}, we provide finite sample upper bounds for (a slight variant of) the proposed estimator with respect to a collection of squared Schatten norm losses.
In addition, we derive minimax lower bounds and hence show that the proposed estimator is simultaneously adaptive and near optimal with respect to all loss functions under consideration.
\prettyref{sec:conclusion} discusses interesting related problems for future research.
The proofs of the theorems are presented in \prettyref{sec:proof}.

\paragraph{Notation} 
For an $n\times p$ matrix $X = (x_{ij})$, 
the $i\Th$ row of $X$ is denoted by $\row{X}{i}$ and the $j\Th$ column by $\col{X}{j}$.
For a positive integer $k$, $[k]$ denotes the index set $\{1, 2, ..., k\}$. For any set $I$, $|I|$ denotes its cardinality and $\comp{I}$ its complement. 
For two subsets $I$ and $J$ of indices, we write $X_{IJ}$ for the $|I|\times |J|$ submatrices formed by $x_{ij}$ with $(i,j) \in I \times J$.
For conciseness, we let
$\row{X}{I} = X_{I [p]}$ and $\col{X}{J} = X_{[n] J}$.
For any matrix $X$, $\supp(X)$ stands for the index set of its nonzero rows.
We denote the rank of $X$ by $\rank(X)$,
and $\sigma_i(X)$ stands for its $i\Th$ largest singular value.
For any $q\in [1,\infty)$, the Schatten-$q$ norm of $X$ is
\begin{align*}
\sqnorm{X} = \pth{\sum_{i=1}^{n\wedge p} \sigma_i^q(X)}^{1/q},
\end{align*}
and for $q = \infty$, $\|X\|_{{\rm S}_\infty} = \sigma_1(X)$.
Note that $\|X\|_{{\rm S}_2} = \Fnorm{X}$ is the Frobenius norm and $\|X\|_{{\rm S}_\infty} = \opnorm{X}$ is the operator norm of $X$.
For any vector $a$, $\norm{a}$ denotes its $\ell_2$ norm.
The $\ell_2/\ell_1$ norm of $X$ is defined as the $\ell_1$ norm of the vector consisting of its row $\ell_2$ norms: 
$\|X\|_{2,1}=\sum_{j=1}^n \|X_{j*}\|$.
If $n\geq p$ and $X$ has orthonormal columns, then we say $X$ is an orthonormal matrix, and we write $X\in O(n,p)$.
We use $\ones_d$ to denote the all-one vector in $\reals^d$.
For any real number $a$ and $b$, set $a\vee b = \max\{a,b\}$, $a\wedge b = \min\{a,b\}$ and $a_+ = a\vee 0$.

\section{Methodology}
\label{sec:method}

\subsection{Main Algorithm}

The proposed estimation scheme, called {\it Double Projected Penalization} (DPP), is summarized in \prettyref{algo:est}.
To initialize the algorithm, we need to specify the rank $r$ of the estimated coefficient matrix and a penalty function $\rho(\cdot\,;\lambda)$ to be used in group penalized regression. 
In what follows, we explain the main ideas underlying the algorithm, while
the choice of penalty and other initialization details 
are deferred to
Sections \ref{sec:group-reg} and \ref{sec:init-alg}.

The algorithm consists of two stages. The first stage involves steps 1--2 and the second stage steps 3--5.
In either stage, one first screens the columns of $Y$, then computes the $r$ leading right singular vectors of the screened response matrix, and finally performs a group penalized regression on the projected data where the projection is onto the subspace spanned by the leading right singular vectors.
The purpose of the screening step is to pick those response variables the signals of which stand out of noise.
To motivate the projection step, we observe that if the right singular vector matrix $V$ of $XA$ were known, then one could immediately reduce dimensionality by considering the new regression problem which replaces $Y$ and $A$ in \prettyref{eq:model} with their projected counterparts $YV$ and $AV$.
Thus, in either stage, we first estimate $V$ by the $r$ leading right singular vectors of the screened response matrix (a further projection is involved in the second stage), and then project the data by post-multiplying the response matrix with the estimated right singular vector matrix.
When regressing the projected responses on $X$, we actually estimate $AV$. 
Note that if $A$ has at most $s$ nonzero rows, so does $AV$. 
Thus, the rows of $AV$ form natural groups and it makes sense to induce row sparsity in our estimator of $AV$ by performing a group penalized regression.

We now move on to discuss the necessity of the second stage. 
Comparing the two stages, we note that both the screening step and the estimation of the right singular matrix $V$ are different, but both differences are due to the involvement of the matrix $U_{(1)}$.
By definition, 
$U_{(1)}\in \reals^{n\times r}$ consists of the left singular vectors of $XB_{(1)}$.
Since $B_{(1)}$ is an estimate of $AV$, the column subspace of $U_{(1)}$ estimates the left singular subspace of $XAV$, or equivalently, 
the left singular subspace of $XA$.
By projecting onto $U_{(1)}$, we increase the signal-to-noise ratio in the screening step. 
As a result, we would be able to select more columns the signals of which might have been drowned in noise in the first stage.
The inclusion of more signal columns of $Y$ would in turn contribute to the estimation accuracy of the final estimator.
Similarly, by pre-multiplying $\wt{Y}^{(1)}$ with $U_{(1)}U_{(1)}'$, we further boost the signal-to-noise ratio when estimating the right singular vector matrix $V$, and thus obtain a better estimator $V_{(1)}$.
As to be revealed by later analysis, the second stage is critical for achieving high estimation accuracy for $A$.

\begin{algorithm}[!bth]
	\SetAlgoLined
\caption{Estimation scheme for $A$ via the Double Projected Penalization
\label{algo:est}}

\KwIn{Observed response matrix $Y$, design matrix $X$, 
rank $r$, noise level $\sigma$,
positive constants $\alpha, \beta$
 and penalty function $\rho(\cdot; \lam)$ with penalty level $\lam$.}

\KwOut{Estimated coefficient matrix $\Ahat$.}


\nl Column screening of $Y$. Select columns
\bes
J_{(0)} = \Big\{j: \|Y_{*j}\|^2\ge \sigma^2(n+\alpha \sqrt{n\log (p\vee m)}) \Big\}.
\ees
Define $\Ytil^{(0)}$, where $\Ytil_{*j}^{(0)}=Y_{*j}I{\{j\in J_{(0)}\}}$.

Compute the right singular vectors of $\Ytil^{(0)}$, denoted by an $m\times r$ matrix $V_{(0)}$.

\nl Group penalized regression
\bes
&& B_{(1)}=\argmin_{B\in \R^{p\times r}} \Big\{ \|YV_{(0)}-XB\|_F^2/2+\rho(B;\lam)\Big\}, 
\ees

\nl Column screening of $Y$. Compute the left singular vectors of  $XB^{(1)}$, denoted by an $n\times r$ matrix $U_{(1)}$.
Select columns
\bes
J_{(1)} = J_{(0)} \cup \Big\{j: \|U_{(1)}'Y_{*j}\|^2\ge \beta \sigma^2(r+ 2\sqrt {3r\log  (p\vee m)}+6\log  (p\vee m)) \Big\}.
\ees
Define $\Ytil^{(1)}$, where $\Ytil_{*j}^{(1)}=Y_{*j}I{\{j\in J_{(1)}\}}$. 

Compute  the first $r$ right singular vectors of $U_{(1)}{U_{(1)}}'\Ytil^{(1)} $, denoted by an $m\times r$ matrix  $V_{(1)}$.

\nl Group penalized regression
\bes
&& B_{(2)}=\argmin_{B\in \R^{p\times r}} \Big\{ \|YV_{(1)}-XB\|_F^2/2+\rho(B;\lam)\Big\}, 
\ees

\nl Compute the estimated coefficient matrix by 
$\Ahat=B_{(2)}{V_{(1)}}'$.

\end{algorithm}

\subsection{Group Penalized Regression}
\label{sec:group-reg}
The penalized regression in steps 2 and 4 of \prettyref{algo:est} can be viewed as a special case of linear regression with group sparsity,
where each row of the coefficient matrix is considered as a group and all groups are of the same size $r$. 

Penalized regression with group structure has been extensively studied. One of the most popular procedures is the group Lasso \citep{Bakin99, Yuan06},
where the penalty function is defined by the $\ell_2/\ell_1$ matrix norm as follows
\bel{eq:penlasso}
\rho(B;\lam)=\lam\|B\|_{2,1}=\lam\sum_{j=1}^p \|B_{j*}\|_2.
\eel
The theoretical properties of group Lasso have been studied in the literature, using ideas originating from the study of Lasso. 
\citet{Huang10} showed the upper bounds for the estimation and prediction errors of group Lasso with proper penalty level under strong group sparsity and group sparse eigenvalue conditions. 
\citet{Lounici11} provided similar error bounds under a group version of the restricted eigenvalue condition. 

In \prettyref{sec:theory}, we will present a theoretically justified choice of the penalty level $\lambda$ for the group Lasso penalty function \eqref{eq:penlasso} when we have i.i.d.~Gaussian noises. 

\subsection{Initialization}
\label{sec:init-alg}

We now discuss the initialization of \prettyref{algo:est}. Throughout, we assume the noise standard deviation $\sigma$ is known. 
Otherwise, we can estimate it by
\begin{align}
\label{eq:sigmahat}
\sigmahat=\text{median}(\sigma(Y))/\sqrt{n\vee m},
\end{align}
where $\sigma(Y)$ is the collection of all nonzero singular values of $Y$.
If the true rank of $A$ is not known, we propose to
apply the estimator in \citet{Bunea11rank}, which is summarized in  \prettyref{algo:rank}.
The user specified parameter can be selected as
\begin{equation}
\label{eq:eta}
\eta = \sqrt{2m}+\sqrt{2(n\wedge p)},
\end{equation}
which was suggested by \citet{Bunea12} for Gaussian data.

\begin{algorithm}[!bt]
\caption{Rank Estimation \label{algo:rank}}
\KwIn{Response matrix $Y$, design matrix $X$, noise level $\sigma$ and a threshold level $\eta$.}
\KwOut{Estimated rank $\rhat$, initial matrix $V_{(0)}$.}

\nl Compute $P=XM^-X'$, where $M=X'X$ and $M^-$ its Moore--Penrose pseudo-inverse. 

\nl Compute the singular values of $PY$ and select 
\bes
\rhat =  \max\sth{j: \sigma_j(PY)\ge \sigma\eta }.
\ees
\end{algorithm}

In practice, we may also select the rank based on cross validation. 
Suppose the data is split into training and test samples. 
For any given value of $r\in [m\wedge p]$, we may run \prettyref{algo:est} using only the training sample, and the resulting $\wh{A}$ is then used to calculate the prediction error on the test sample.
Thus, we can select the value of $r$ that leads to the smallest prediction error on the test sample, or the smallest average prediction error if $k$-fold cross validation is used.

\section{Numerical Study}
\label{sec:num}



\subsection{Simulation} 
In this part, we compare the proposed DPP method, i.e. \prettyref{algo:est}, with the thresholding SVD method (TSVD) in \cite{ma2014learning} and the exclusive extraction algorithm (EEA) in \cite{ChenChan12}. 
For fair comparison, 
equations \prettyref{eq:sigmahat}--\eqref{eq:eta} and \prettyref{algo:rank} were applied to estimate the noise variance and the rank of the coefficient matrix for all methods in all simulation settings.  

\paragraph{Comparison under different model parameters}
We first compare these methods under different design matrices, ranks and sparsity levels.
To this end, 
we borrow several simulation settings from \citet{Bunea12}, but also add columns of pure noises in the response matrices to induce two-way sparsity. 
The rows of the design matrix $X$ are i.i.d.~random vectors sampled from a multivariate Gaussian distribution with mean zero and covariance matrix $\Sigma$, where $\Sigma_{ij}=\rho^{|i-j|}$. 
The coefficient matrix $A\in \reals^{p\times m}$ has the form
\bes
A=\begin{pmatrix}
A_1 & 0 \\ 0 & 0
\end{pmatrix}
=\begin{pmatrix}
bB_0B_1 & 0 \\ 0 & 0
\end{pmatrix}
\ees
with $b>0$, $B_0\in \reals^{s\times r}$ and $B_1\in \reals^{r\times k}$, where all entries in $B_0$ and $B_1$ are filled with i.i.d.~random numbers from $N(0,1)$. 
The noise matrix $Z\in \reals^{n\times m}$ has i.i.d.~$N(0,\sigma^2)$ entries.
The following settings are considered with $\sigma=1$ and $\rho=0.1$ or 0.9:
\begin{itemize}
\item $n=30$, $m=50$, $p=100$, $s=15$, $k=10$, $r=2$, $b=0.5$ or 1; 
\item $n=100$, $m=50$, $p=25$, $s=15$, $k=25$, $r=5$, $b=0.2$ or 0.4.
\end{itemize}
Large values of $b$ correspond to large signal-to-noise ratios.


We compare the following five estimators derived from the three methods. 
The first two estimators are computed by \prettyref{algo:est} with $\alpha=2\sqrt{3}$, $\beta=1$ and two possible choices of penalty level $\lam$. 
The one with an estimated universal penalty level $\lam_{\mathrm{univ}}=\hsigma \sqrt{2\log(p)/n}$ is denoted by DPP, while the estimator DPP.cv selects a penalty level $\lam$ from the set $\{2^{i/2}\lam_{\mathrm{univ}}: i=-5,\dots,4\}$ via 5-fold cross validation. 
The third is the TSVD estimator which
was implemented by the R package ``tsvd'' (version 1.3) with the default penalization option ``BICtype=2''. 
The last two are EEA and its iterative extension, denoted by iEEA.  

\begin{table}[!tb]
\caption{\label{tab:simu1} Performance of five methods: means and standard deviations of prediction errors, estimation errors and sizes of selected models across 50 replications.
Simulation setting 1: $n=30$, $m=50$, $p=100$, $s=|\supp(A)|=15$, $k=|\supp(A')|=10$, $r=2$.}
\smallskip
\centering
\begin{tabular}{cccccc}
\hline\hline
$b$ & Method 
& $\|\Yhat-Y\|^2_F/(mn)$ & $\|\Ahat-A\|^2_F/(mp)$ & $|\supp(\Ahat)|$ & $|\supp(\Ahat')|$ \\\hline\hline
\multicolumn{6}{c}{$\rho=0.1$}\\\hline
0.5 &  \multicolumn{5}{c}{$ \rhat =  1.92 \pm 0.27  $ }\\\hline
& DPP & $ 1.4554 \pm 0.2803  $  & $ 0.0048 \pm 0.0027  $  & $ 28.34 \pm 3.97  $  & $ 8.64 \pm 1.24  $  \\
& DPP.cv & $ 1.4586 \pm 0.2881  $  & $ 0.0048 \pm 0.0029  $  & $ 34.70 \pm 7.65  $  & $ 8.64 \pm 1.24  $  \\
& TSVD & $ 1.9856 \pm 0.4663  $  & $ 0.0099 \pm 0.0051  $  & $ 5.68 \pm 2.97  $  & $ 9.06 \pm 2.66  $  \\
& EEA & $ 1.5832 \pm 0.3131  $  & $ 0.0062 \pm 0.0031  $  & $ 29.84 \pm 6.01  $  & $ 9.34 \pm 0.96  $  \\
& iEEA & $ 1.5549 \pm 0.3184  $  & $ 0.0059 \pm 0.0030  $  & $ 20.94 \pm 4.37  $  & $ 9.26 \pm 0.96  $  \\\hline

1 &  \multicolumn{5}{c}{$ \rhat =  2 \pm 0  $ }\\\hline
& DPP & $ 2.4490 \pm 0.9516  $  & $ 0.0148 \pm 0.0087  $  & $ 33.44 \pm 3.57  $  & $ 9.60 \pm 0.57  $  \\
& DPP.cv & $ 2.4364 \pm 0.9682  $  & $ 0.0148 \pm 0.0092  $  & $ 37.24 \pm 6.65  $  & $ 9.60 \pm 0.57  $  \\
& TSVD & $ 4.4173 \pm 1.5142  $  & $ 0.0351 \pm 0.0146  $  & $ 5.74 \pm 2.65  $  & $ 10.48 \pm 2.56  $  \\
& EEA & $ 3.0317 \pm 1.2327  $  & $ 0.0207 \pm 0.0110  $  & $ 40.26 \pm 5.10  $  & $ 9.80 \pm 0.64  $  \\
& iEEA & $ 2.6564 \pm 1.1370  $  & $ 0.0171 \pm 0.0108  $  & $ 28.70 \pm 4.28  $  & $ 9.64 \pm 0.60  $  \\\hline\hline

\multicolumn{6}{c}{$\rho=0.9$}\\\hline
0.5 &  \multicolumn{5}{c}{$ \rhat =   1.54 \pm 0.5  $ }\\\hline
& DPP & $ 1.1819 \pm 0.1063  $  & $ 0.0094 \pm 0.0039  $  & $ 15.08 \pm 3.69  $  & $ 7.72 \pm 1.69  $  \\
& DPP.cv & $ 1.1779 \pm 0.1047  $  & $ 0.0090 \pm 0.0036  $  & $ 20.72 \pm 8.87  $  & $ 7.72 \pm 1.67  $  \\
& TSVD & $ 1.2951 \pm 0.1878  $  & $ 0.0119 \pm 0.0048  $  & $ 5.64 \pm 3.19  $  & $ 9.24 \pm 3.63  $  \\
& EEA & $ 1.1712 \pm 0.0985  $  & $ 0.0091 \pm 0.0037  $  & $ 12.72 \pm 5.74  $  & $ 8.82 \pm 2.47  $  \\
& iEEA & $ 1.1668 \pm 0.0969  $  & $ 0.0090 \pm 0.0038  $  & $ 9.98 \pm 4.06  $  & $ 8.40 \pm 1.90  $  \\\hline

1 &  \multicolumn{5}{c}{$ \rhat =  2 \pm 0  $ }\\\hline
& DPP & $ 1.4498 \pm 0.2353  $  & $ 0.0299 \pm 0.0130  $  & $ 21.88 \pm 3.42  $  & $ 9.54 \pm 0.79  $  \\
& DPP.cv & $ 1.4463 \pm 0.2509  $  & $ 0.0286 \pm 0.0137  $  & $ 27.34 \pm 7.21  $  & $ 9.54 \pm 0.79  $  \\
& TSVD & $ 2.1500 \pm 0.6380  $  & $ 0.0505 \pm 0.0249  $  & $ 7.84 \pm 2.78  $  & $ 12.60 \pm 4.73  $  \\
& EEA & $ 1.5153 \pm 0.3527  $  & $ 0.0306 \pm 0.0157  $  & $ 25.12 \pm 7.17  $  & $ 10.32 \pm 2.11  $  \\
& iEEA & $ 1.5937 \pm 0.4998  $  & $ 0.0342 \pm 0.0206  $  & $ 17.10 \pm 5.33  $  & $ 9.80 \pm 0.93  $  \\\hline\hline

\end{tabular}
\end{table}

\begin{table}[!tb]
\caption{\label{tab:simu2}
Performance of five methods: means and standard deviations of prediction errors, estimation errors and sizes of selected models across 50 replications.
Simulation setting 2: $n=100$, $m=50$, $p=25$, $s=|\supp(A)|=15$, $k=|\supp(A')|=25$, $r=5$.}
\smallskip
\centering
\begin{tabular}{cccccc}
\hline\hline
$b$ & Method 
& $\|\Yhat-Y\|^2_F/(mn)$ & $\|\Ahat-A\|^2_F/(mp)$ & $|\supp(\Ahat)|$ & $|\supp(\Ahat')|$ \\\hline\hline

\multicolumn{6}{c}{$\rho=0.1$}\\\hline
0.2 &  \multicolumn{5}{c}{$ \rhat =  4.74 \pm 0.44  $ }\\\hline
& DPP & $ 1.0759 \pm 0.0273  $  & $ 0.0030 \pm 0.0008  $  & $ 17.16 \pm 1.28  $  & $ 24.42 \pm 0.73  $  \\
& DPP.cv & $ 1.0605 \pm 0.0265  $  & $ 0.0023 \pm 0.0008  $  & $ 24.46 \pm 0.73  $  & $ 24.40 \pm 0.78  $  \\
& TSVD & $ 1.3859 \pm 0.1397  $  & $ 0.0157 \pm 0.0054  $  & $ 13.26 \pm 1.84  $  & $ 30.24 \pm 5.34  $  \\
& EEA & $ 1.0894 \pm 0.0271  $  & $ 0.0035 \pm 0.0007  $  & $ 15.36 \pm 0.63  $  & $ 27.34 \pm 1.88  $  \\
& iEEA & $ 1.0883 \pm 0.0268  $  & $ 0.0035 \pm 0.0007 $  & $ 15.18 \pm 0.48  $  & $ 25.92 \pm 1.31  $  \\\hline

0.4 &  \multicolumn{5}{c}{$ \rhat =   5 \pm 0  $ }\\\hline
& DPP & $ 1.0729 \pm 0.0245  $  & $ 0.0029 \pm 0.0004  $  & $ 17.58 \pm 1.46  $  & $ 24.98 \pm 0.14  $  \\
& DPP.cv & $ 1.0488 \pm 0.0232  $  & $ 0.0019 \pm 0.0002  $  & $ 24.66 \pm 0.59  $  & $ 24.98 \pm 0.14  $  \\
& TSVD & $ 1.2569 \pm 0.1423  $  & $ 0.0105 \pm 0.0056  $  & $ 15.20 \pm 0.78  $  & $ 29.90 \pm 5.04  $  \\
& EEA & $ 1.0734 \pm 0.0277  $  & $ 0.0029 \pm 0.0005  $  & $ 15.28 \pm 0.61  $  & $ 27.00 \pm 1.69  $  \\
& iEEA & $ 1.0733 \pm 0.0255  $  & $ 0.0030 \pm 0.0005  $  & $ 15.04 \pm 0.20  $  & $ 25.74 \pm 0.99  $  \\\hline\hline

\multicolumn{6}{c}{$\rho=0.9$}\\\hline
0.2 &  \multicolumn{5}{c}{$ \rhat =   3.16 \pm 0.55  $ }\\\hline
& DPP & $ 1.1037 \pm 0.0274  $  & $ 0.0286 \pm 0.0059  $  & $ 14.82 \pm 1.93  $  & $ 22.34 \pm 2.19  $  \\
& DPP.cv & $ 1.0756 \pm 0.027  $  & $ 0.0179 \pm 0.0048  $  & $ 22.60 \pm 2.05  $  & $ 22.38 \pm 2.19  $  \\
& TSVD & $ 1.2566 \pm 0.0937  $  & $ 0.0444 \pm 0.0121  $  & $ 8.28 \pm 2.84  $  & $ 29.18 \pm 6.02  $  \\
& EEA & $ 1.0962 \pm 0.0249  $  & $ 0.0239 \pm 0.0046  $  & $ 13.66 \pm 2.12  $  & $ 29.66 \pm 4.16  $  \\
& iEEA & $ 1.0944 \pm 0.0273  $  & $ 0.0246 \pm 0.0052  $  & $ 12.61 \pm 1.78  $  & $ 25.98 \pm 2.45  $  \\\hline

0.4 &  \multicolumn{5}{c}{$ \rhat =   4.56 \pm 0.5  $ }\\\hline
& DPP & $ 1.1601 \pm 0.0393  $  & $ 0.0617 \pm 0.0131  $  & $ 16.72 \pm 1.28  $  & $ 24.88 \pm 0.33  $  \\
& DPP.cv & $ 1.0626 \pm 0.0315  $  & $ 0.0167 \pm 0.0057  $  & $ 24.30 \pm 0.76  $  & $ 24.88 \pm 0.33  $  \\
& TSVD & $ 1.6125 \pm 0.2274  $  & $ 0.1330 \pm 0.0370  $  & $ 13.28 \pm 2.84  $  & $ 34.98 \pm 6.24  $  \\
& EEA & $ 1.1056 \pm 0.0293  $  & $ 0.0286 \pm 0.0074  $  & $ 15.64 \pm 1.19  $  & $ 35.98 \pm 5.28  $  \\
& iEEA & $ 1.1167 \pm 0.0321  $  & $ 0.0328 \pm 0.0081  $  & $ 14.88 \pm 0.63  $  & $ 28.66 \pm 2.80  $  \\\hline\hline

\end{tabular}
\end{table}

\prettyref{tab:simu1} and \prettyref{tab:simu2} report the means and the standard deviations of prediction errors, estimation errors and sizes of selected models based on $50$ replications in each setting. 
It is noticed that DPP.cv has the best performance for almost all cases considered, while DPP with the estimated universal penalty level tends to choose a smaller model with slightly larger estimation errors. 
In some settings, DPP.cv was able to reduce the estimation errors by up to $40\%$ when compared to TSVD, EEA and iEEA. 
Note that when comparing prediction errors, the quantity that makes most sense is the excessive error an estimator makes in addition to the oracle error that one would make even when the true coefficient matrix is given.
In the current setting, the (normalized) oracle error is $1$.
In terms of the excessive prediction error, it is observed that the prediction accuracy of DPP.cv outperformed the other methods by a similar percentage.

\paragraph{Comparison under different noise distributions}
We now compare the performance of these methods on non-Gaussian data. 
To this end, we consider three different noise distributions: $\sqrt{3/5}t_5$, $\sqrt{4/5}t_{10}$ and 3 Uniform (the sum of three uniform $[-1,1]$ random variables).
Here, $t_\nu$ stands for the $t$-distribution with $\nu$ degrees of freedom. 
We note that all three distributions have been normalized to have unit variance.
\prettyref{tab:simu3} reports the simulation results for the second setting with $\rho=0.1, b=0.2$ and for all three noise distributions.
It shows that our methods, esp.~DPP.cv, preserve competitive performance even for non-Gaussian data.
Moreover, when compared with the corresponding performance measures on Gaussian data (the first section in \prettyref{tab:simu2}), we see that all the estimators were relatively robust to the noise distributions, though their performance (with the exception of TSVD) did degrade as the tail of the noise distribution gets heavier.

\begin{table}[tb]
\caption{\label{tab:simu3}Performance of five methods on non-Gaussian data. Simulation setting 2: $n=100$, $m=50$, $p=25$, $s=|\supp(A)|=15$, $k=|\supp(A')|=25$, $r=5$, $\rho=0.1$ and $b=0.2$ (the same as the first section of \prettyref{tab:simu2}). }
\centering
\begin{tabular}{cccccc}
\hline\hline
Noise dist. & Method 
& $\|\Yhat-Y\|^2_F/(mn)$ & $\|\Ahat-A\|^2_F/(mp)$ & $|\supp(\Ahat)|$ & $|\supp(\Ahat')|$ \\\hline\hline
$\sqrt{3/5}t_5$ &  \multicolumn{5}{c}{$ \rhat =   4.66 \pm 0.48   $ }\\\hline
& DPP & $ 1.0786 \pm 0.0280  $  & $ 0.0031 \pm 0.0007  $  & $ 17.48 \pm 1.36  $  & $ 25.60 \pm 1.29  $  \\
& DPP.cv & $ 1.0637 \pm 0.0286  $  & $ 0.0024 \pm 0.0007  $  & $ 24.28 \pm 1.01  $  & $ 25.62 \pm 1.28  $  \\
& TSVD & $ 1.3733 \pm 0.1324  $  & $ 0.0152 \pm 0.0052  $  & $ 13.52 \pm 1.47  $  & $ 28.18 \pm 4.55  $  \\
& EEA & $ 1.0906 \pm 0.0262  $  & $ 0.0036 \pm 0.0007  $  & $ 15.24 \pm 0.59  $  & $ 27.62 \pm 1.83  $  \\
& iEEA & $ 1.0899 \pm 0.0264  $  & $ 0.0035 \pm 0.0007  $  & $ 15.10 \pm 0.46  $  & $ 26.24 \pm 1.29  $  \\
\hline\hline
$\sqrt{4/5}t_{10}$ &  \multicolumn{5}{c}{$\rhat =  4.78 \pm 0.42  $}\\\hline  
& DPP & $ 1.0758 \pm 0.0292  $  & $ 0.0029 \pm 0.0007  $  & $ 17.70 \pm 1.34  $  & $ 24.50 \pm 0.86  $  \\
& DPP.cv & $ 1.0589 \pm 0.0286  $  & $ 0.0022 \pm 0.0006  $  & $ 24.50 \pm 0.68  $  & $ 24.50 \pm 0.91  $  \\
& TSVD & $ 1.4164 \pm 0.1443  $  & $ 0.0166 \pm 0.0057  $  & $ 12.82 \pm 2.54  $  & $ 30.42 \pm 5.35  $  \\
& EEA & $ 1.0921 \pm 0.0291  $  & $ 0.0036 \pm 0.0006  $  & $ 15.24 \pm 0.62  $  & $ 27.70 \pm 1.90  $  \\
& iEEA & $ 1.0874 \pm 0.0292  $  & $ 0.0034 \pm 0.0006  $  & $ 15.08 \pm 0.40  $  & $ 26.14 \pm 1.23  $  \\
\hline\hline
3 Uniform &  \multicolumn{5}{c}{$\rhat = 4.72 \pm 0.45  $ }\\\hline  
& DPP & $ 1.0781 \pm 0.0299  $  & $ 0.0030 \pm 0.0007  $  & $ 17.50 \pm 1.39  $  & $ 24.14 \pm 0.81  $  \\
& DPP.cv & $ 1.0627 \pm 0.0315  $  & $ 0.0023 \pm 0.0007  $  & $ 24.42 \pm 0.93  $  & $ 24.12 \pm 0.85  $  \\
& TSVD & $ 1.4232 \pm 0.1292  $  & $ 0.0170 \pm 0.0051  $  & $ 12.78 \pm 2.26  $  & $ 28.64 \pm 4.06  $  \\
& EEA & $ 1.0931 \pm 0.0308  $  & $ 0.0036 \pm 0.0007  $  & $ 15.24 \pm 0.52  $  & $ 27.36 \pm 1.63  $  \\
& iEEA & $ 1.0898 \pm 0.0309  $  & $ 0.0035 \pm 0.0008  $  & $ 15.14 \pm 0.35  $  & $ 25.76 \pm 1.20  $  \\
\hline\hline
\end{tabular}
\end{table}


\subsection{\textit{In vivo} Calcium Imaging Data}
Calcium imaging has become an increasingly important tool in neuroscience to track the activity of neuronal populations by recording the dynamics of the time-varying fluorescence of the neurons \citep{akerboom2012optimization, chen2013ultrasensitive}. When a neuron fires an electrical action potential (spike), calcium will enter the cell and change its fluorescent properties by attaching to genetically encoded calcium indicators. By recording the movies of fluorescence activities, researchers hope to identify and demix the regions of interest (ROIs) as well as extract spike traces \citep{pnevmatikakis2014structured, haeffele2014structured}. 

Following the spatiotemporal model in \cite{pnevmatikakis2014structured}, suppose an $l_1\times l_2$ area (2d imaging plane of an original 3d volume) containing $K$ neurons (possibly overlapping) is monitored for $T$ time frames. Let $c_i=(c_i(1), \cdots, c_i(T))^\prime\in\mathbb{R}^{T}$ be the calcium activity and $\omega_i\in \mathbb{R}^m$ ($m=l_1\times l_2$) be the spatial footprint (stacked by the monitored area) of the $i\Th$ neuron. Then the fluorescence intensity observed at time $t$ can be modeled as 
\begin{equation}
y_t=\sum_{i=1}^K \omega_ic_i(t)+z_t, \ 1\leq t\leq T,
\nonumber
\end{equation}
where $z_t\stackrel{iid}{\sim} N(0, \sigma^2I_m)$ is the noise vector at time $t$. In matrix notations, 
\begin{equation}
Y=C\Omega+Z,
\nonumber
\end{equation}
where $Y=(y_1, \cdots, y_T)^\prime\in\mathbb{R}^{T\times m}, \Omega=(\omega_1, \cdots, \omega_K)^\prime\in\mathbb{R}^{K\times m}, C=(c_1, \cdots, c_K)\in\mathbb{R}^{T\times K}, Z=(z_1, \cdots, z_T)^\prime\in\mathbb{R}^{T\times m}$. Let $s_i=(s_i(1), \cdots, s_i(T))^\prime\in\mathbb{R}^T$ be the spike trace of the $i\Th$ neuron. Then the calcium activity can be characterized by a simple first order autoregressive model, 
\begin{equation} 
c_i(t)=\gamma c_i(t-1)+s_i(t), \,1\leq t\leq T, 
\nonumber
\end{equation}
or equivalently ($c_i(0)=0$ by convention),
\begin{equation}
S=GC,
\nonumber
\end{equation}
where $S=(s_1, \cdots, s_K)\in\mathbb{R}^{T\times K}$ and 
$$G=
 \begin{pmatrix}
  1& 0& \cdots & 0 \\
-\gamma  & 1& \ddots & \vdots\\
  \vdots  & \ddots  & \ddots &0 \\
0 & \cdots& -\gamma & 1
 \end{pmatrix}\in\mathbb{R}^{T\times T}.$$
In this way, 
\begin{align}
Y=G^{-1} S\Omega+Z=XA+Z 
\label{product}
\end{align}
where $A=S\Omega$ is the spatiotemporal convolution matrix and $X=G^{-1}$ is the known design matrix\footnote{Following \cite{vogelstein2010fast}, $\gamma$ is set at $\gamma=1-1/(\mbox{frame\, rate})$.}. 
The support of $\Omega$ is the location of the neurons and the support of $S$ represents the time frames when the neurons fire. Because the number of neurons in the monitored area is small and the neurons do not fire very frequently,  $\Omega$ is approximately row sparse and $S$ is approximately column sparse, which together imply that $A$ is two-way sparse (also low-rank by definition). 
Therefore, the generative model \eqref{product} can be viewed as a special case of model \eqref{eq:model} with $n=p=T$ and $m=l_1\times l_2$.  
To recover $\Omega$ and $S$, we suggest first estimating $A$ by the proposed algorithm and then running a nonnegative matrix factorization (NMF) on $\widehat{A}$ to obtain $\widehat{\Omega}$ and $\widehat{S}$. 
\citet{pnevmatikakis2014structured} proposed an alternating $l_1$ minimization strategy to estimate $\Omega$ and $S$ but no theoretical guarantee has been established for such heuristic. 
When signal-to-noise ratio is not high, their algorithm could be sensitive to initialization and could converge to a local minimum and yield suboptimal result. 
The DPP procedure proposed here is more robust because applying a denoising step in the first place removes most of the noise and hence the subsequent matrix factorization is less sensitive to initialization. 
\begin{figure}[tb]
  \centering
      \includegraphics[width=0.9\textwidth]{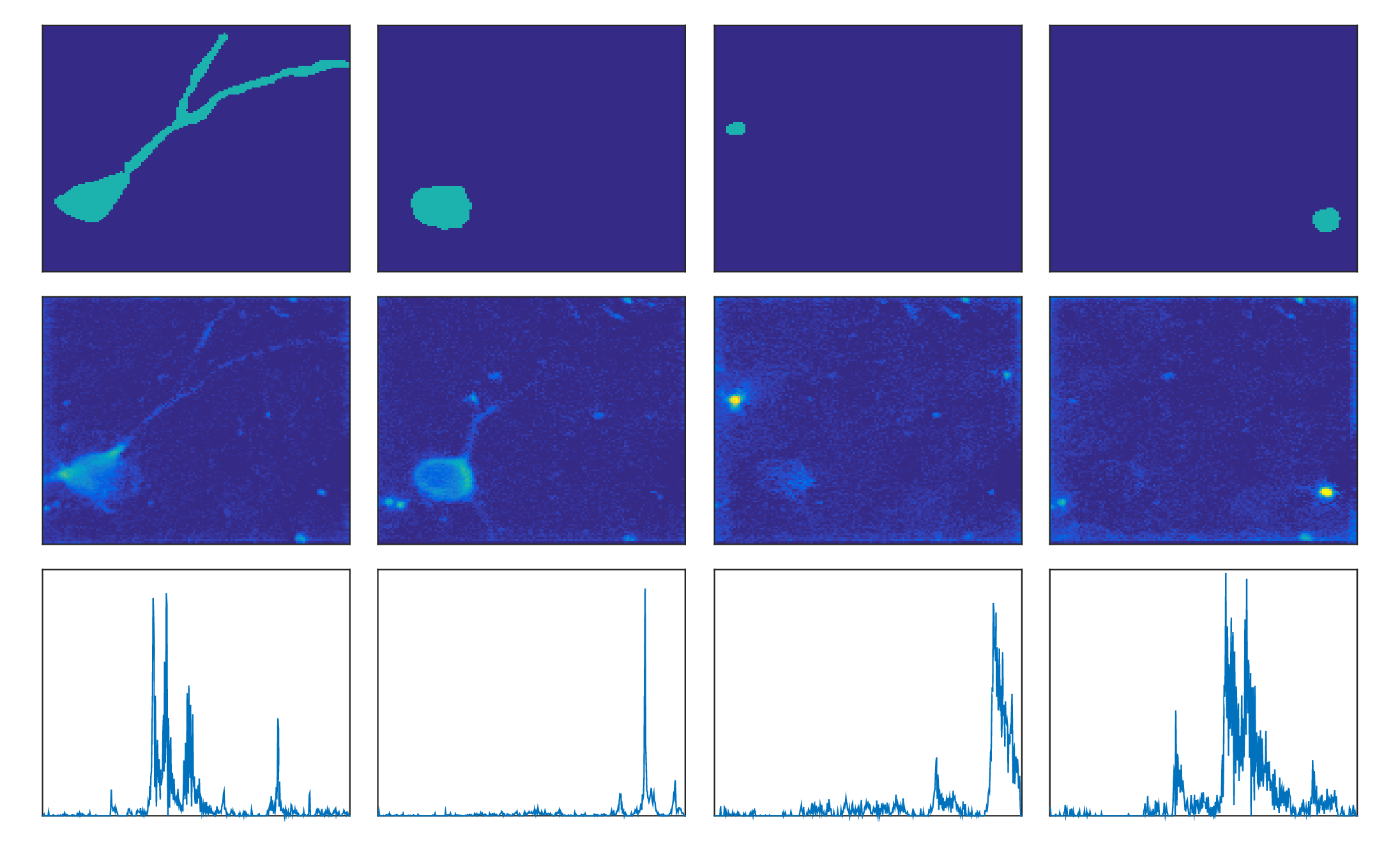}
  \caption{
Application to \textit{in vivo} calcium imaging data. Top: manually segmented regions of neurons. Middle: heat maps of the recovered spatial components. Bottom: estimated spike trace.}
  \label{neuron-results}
\end{figure}

The calcium imaging data ($n=p=T=559$, $m=135\times 131$) we use here is taken \textit{in vivo} from the primary auditory cortex of a mouse with genetically encoded calcium indicator GCaMP5 \citep{akerboom2012optimization}. 
We report here four most significant neurons to demonstrate the effectiveness of the proposed method as illustrated in Figure~\ref{neuron-results}. The top panel shows the manually segmented regions of the neurons from the raw dataset, which can be approximately regarded as the true support of the spatial component $\Omega$. The first neuron consists of a cell body with a dendritic branch and it heavily overlaps with the second neuron, making manual segmentation very challenging. The middle panel displays the heat maps of the recovered neurons by the proposed approach and they match the manual segmentation very well. The bottom panel of Figure~\ref{neuron-results} shows the estimated spike traces.

\section{Theoretical Properties}
\label{sec:theory}


In this section, we present theoretical results for a slight variant of the proposed estimation scheme when the noise matrix $Z$ in \eqref{eq:model} has i.i.d.~Gaussian entries. Their proofs are deferred to \prettyref{sec:proof}.

\subsection{Minimax Upper Bounds}
To facilitate the discussion, we put the estimation problem in a decision--theoretic framework.
We are interested in estimating the coefficient matrix $A$ in model \eqref{eq:model} where $A$ is both two-way sparse and of low rank, and $Z$ has i.i.d.~$N(0,\sigma^2)$ entries.
Thus, we assume that $A$ belongs to the following parameter space
\bel{eq:para-space} 
\Theta(s,k, r, d, \gamma) &=&  \Big\{A\in \reals^{p\times m}: \rank(A) = r,
\gamma d\geq \sigma_1(A)\ge\dots\ge \sigma_r(A) >d >0,  
\cr&&~~~~~~~~~~~~~~ \quad |\supp(A)|\leq s, |\supp(A')|\le k \Big\},
\eel
where $\supp(M)$ is the index set of nonzero rows in matrix $M$. 
Here and after, we treat $\gamma$ as an absolute positive constant.
To measure the accuracy of any estimator $\wt{A}$, we consider the following class of squared Schatten norm losses:
\begin{align}
\label{eq:loss}
L_q(A, \wt{A}) = \sqnorm{\wt{A} - A}^2, \qquad q\in [1,2].
\end{align}
For simplicity, we assume the noise variance $\sigma^2$ is known.
In addition, we treat the design matrix $X$ as fixed and the only source of randomness is the noise matrix $Z$.
In what follows, we present high probability error bounds for (a slight variant of) the DPP estimator where independent samples are generated and used in steps 1--4.
We believe the deviation from \prettyref{algo:est} is an artifact of the proof technique. 
Numerical studies (not reported) showed that the algorithm produces comparable results whether independent samples are used or a single sample is used repeatedly.

\paragraph{Independent sample generation}
Note that we can generate the desired independent samples from the observed $(X,Y)$ when the noises are homoscedastic and Gaussian.  
Indeed, when the entries of the noise matrix $Z$ are i.i.d. $N(0,\sigma^2)$, we can first generate an independent copy $\Ztil$ so that all entries in $Z+\Ztil$ and $Z-\Ztil$ are mutually independent and all follow the same Gaussian distribution $N(0, 2\sigma^2)$. 
Thus, $Y+\wt{Z}$ and $Y-\wt{Z}$ are independent, following model \eqref{eq:model} with i.i.d.~$N(0,2\sigma^2)$ noises. 
Employing this trick twice, we can generate four independent copies of responses
\bes
Y_{(i)}=XA+Z_{(i)}, \quad i=0,1,2,3,
\ees
where $Z_{(i)}$ has i.i.d. $N(0,\tsigma^2)$ entries with $\tsigma=2\sigma$. 
In the rest of this paper, when we mention \prettyref{algo:est}, we refer to the procedure with independent samples $Y_{(i)}$ used in the $(i+1)\Th$ step, $i=0,1,2,3$, where the noise variance is $\tsigma^2=4\sigma^2$.

\paragraph{The design matrix}
Without of loss of generality, we assume $X$ is of full rank.
Otherwise, we can always perform the following operation to reduce to the full rank case.
If $\rank(X) = q < n\wedge p$ and let $O\in \reals^{n\times q}$ be its left singular vector matrix. 
Setting $\wt{Y} = O'Y$ and $\wt{X} = O'X$, we obtain that $\wt{Y}$ and $\wt{X}$ satisfy model \eqref{eq:model} with the same coefficient matrix $A$, i.i.d.~$N(0,\sigma^2)$ noises and a design matrix of full rank.

We write the singular value decomposition of $XA$ as 
\begin{align}
	\label{eq:XA-svd}
XA = U\Delta V'
\end{align}
with $U\in O(n,r)$, $V\in O(m,r)$ and $\Delta = \diag(\delta_1,\dots, \delta_r)$ collects the non-zero singular values of $XA$. 
To introduce appropriate assumptions on $X$, we first make the following definition. 

\begin{definition}
\label{def:riesz}
For any $k\in [p]$,
the \emph{$\ell$-sparse Riesz constants} $\kappa_{\pm}(\ell)$ of $X$ are defined as 
\begin{align}
\label{eq:kappa}
\kappa_-^2(\ell;X) = \min_{B\subset [p], |B| = \ell} \sigma_{\min}(X_{*B}'X_{*B}),\qquad
\kappa_+^2(\ell;X) = \max_{B\subset [p], |B| = \ell} \sigma_{\max}(X_{*B}'X_{*B})
\end{align}
\end{definition}

By definition, if the $\ell$-sparse Riesz constants of $X$ are $\kappa_\pm(\ell;X)$, then for any $l\in [\ell]$, the $l$-sparse Riesz constants $\kappa_\pm(l;X)$ of $X$ satisfy $\kappa_-(\ell;X)\leq \kappa_-(l;X)\leq \kappa_+(l;X)\leq \kappa_+(\ell;X)$.

To establish upper bounds for the proposed estimator, for some integer $s_*$  depending only on $s$, we require the $s_*$-sparse Riesz constants of $X$ to satisfy the following condition. 

\begin{condition}[Sparse eigenvalue condition]
\label{cond-speigen}
There exist positive constants $s_*$ and $c_*$ and $K\geq 1$, such that the $s_*$-sparse Riesz constants satisfy
$K^{-1}\leq \kappa_-(s_*;X) \leq \kappa_+(s_*;X)\leq K$ and
\begin{align*}
\frac{\kappa^2_+(s_*;X)-\kappa^2_-(2s_*;X)}{\kappa^2_-(s_*;X)}<c_*.
\end{align*}
\end{condition}

We do not put condition on $\kappa_-(2s_*;X)$. Following the above definition and discussion, we know that $0\leq \kappa_-(2s_*;X) \leq \kappa_-(s_*;X)$ always holds.

\bigskip

The following theorem gives high probability upper bounds, provided that the design matrix satisfies mild regularity conditions and the penalty level is properly chosen. 

\begin{theorem}
\label{thm:upper}
Let  $A\in \Theta(s, k,r, d,\gamma)$ where $s\geq r\geq 1$.
Set the penalty level
\begin{align}
	\label{eq:lambda}
\lam = 4\sigma\max_{ j\le p}\|X_{*j}\|(\sqrt{r}+\sqrt{4\log (p\vee m)})
\end{align}
in steps 2 and 4 of \prettyref{algo:est} with the group Lasso penalty \eqref{eq:penlasso}.
Let $\alpha=2\sqrt{3}$ and $\beta=1.1$ in \prettyref{algo:est}.
Suppose that Condition \ref{cond-speigen} holds with an absolute constant $K>1$ for all $X$ and positive constants $s_*, c_*$ satisfying
\begin{equation}
	\label{eq:sparse-eig}
	s_*\ge 2s,\quad 
	6c_*\le \sqrt{s_*/s-1},
\end{equation}
and that there exist sufficiently small constants $c_0 > 0$ and $c_1 > 0$ such that 
\begin{align}
	\label{eq:init-cond}
\frac{2\sigma}{d}\Big\{ \sqrt{n}+\sqrt{k}+2\sqrt{\log (p\vee m)} + \sqrt{k\sqrt{n\log (p\vee m)}}\Big\} \le c_0, \quad
\sqrt{s}\lam/d\leq c_1.
\end{align}
Then uniformly over $\Theta(s,k,r,d,\gamma)$ in \eqref{eq:para-space}, with probability at least $1-3(p\vee m)^{-1}$, the output $\wh{A}$ of \prettyref{algo:est} satisfies
\bes
L_q(A, \wh{A})
\leq C \sigma^2 r^{2/q-1} (k+s)(r+\log (p\vee m)),
\qquad \text{for all $q\in [1, 2]$}
\ees
where $C$ is a constant depending only on $\kappa_\pm(s_*),  \gamma, c_*, c_0$ and $c_1$.
\end{theorem}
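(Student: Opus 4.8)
The plan is to treat \prettyref{algo:est} as a forward chain of four estimation steps and to propagate errors through it, ending with a bias--variance split of $\wh A-A$. Since $X$ is of full rank and $A$ has at most $s\le s_*$ nonzero rows, supported on some set $S$, Condition~\ref{cond-speigen} forces $\rank(X_{*S})=|S|$ and hence $\rank(XA)=\rank(A)=r$; in particular the right singular subspace of $XA$ equals that of $A$, so $AVV'=A$ and one may write
\begin{align*}
\wh A - A \;=\; (B_{(2)} - AV_{(1)})\,V_{(1)}' \;-\; A\,(VV' - V_{(1)}V_{(1)}').
\end{align*}
Each summand has rank at most $r$, so by the power--mean inequality $\sqnorm{M}^{2}\le r^{2/q-1}\fnorm{M}^{2}$ for every $q\in[1,2]$ and either summand $M$; it therefore suffices to bound each summand in squared Frobenius norm by $O(\sigma^{2}rm+s\lam^{2})$, with constants depending only on $\kappa_\pm(s_*),\gamma,c_*,c_0,c_1$.

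For the first summand I would invoke a standard group Lasso oracle bound for step~4. Because $V_{(1)}\in O(m,r)$, the noise $Z_{(3)}V_{(1)}$ in the regression $Y_{(3)}V_{(1)}=XAV_{(1)}+Z_{(3)}V_{(1)}$ has i.i.d.\ $N(0,\tsigma^{2})$ entries, and the target $AV_{(1)}$ inherits the at most $s$ nonzero rows of $A$. Since $\tsigma=2\sigma$, the penalty level \eqref{eq:lambda} exceeds the usual threshold by a constant factor, and Condition~\ref{cond-speigen} with \eqref{eq:sparse-eig} supplies the requisite restricted eigenvalue property; hence with probability at least $1-p^{-1}$ one gets $\fnorm{B_{(2)}-AV_{(1)}}^{2}\lesssim s\lam^{2}$. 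Right multiplication by the orthonormal $V_{(1)}'$ preserves Frobenius norm, so this summand is $O(s\lam^{2})$.

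The bulk of the work is the second summand, which reduces to controlling $\|\sin\Theta(V_{(1)},V)\|_{F}$, and here I would chase the two singular subspaces through steps~1--3. Step~1 is again a group Lasso, now with target $AV_{(0)}$ and i.i.d.\ Gaussian noise, giving $\|X(B_{(1)}-AV_{(0)})\|_{F}\lesssim\sqrt s\lam$ with high probability. Since $XAV_{(0)}=U\Delta V'V_{(0)}$ has $r$th singular value at least $\sigma_r(V'V_{(0)})\,\delta_r\ge c_0K^{-1}d$, while $\delta_r=\sigma_r(XA)\ge\kappa_-(s_*)\,\sigma_r(A)>K^{-1}d$, the hypothesis $\sqrt s\lam/d\le c_1$ makes this perturbation small relative to the signal, so Wedin's $\sin\Theta$ theorem in Frobenius norm gives $\|\sin\Theta(U_{(1)},U)\|_{F}\lesssim\sqrt s\lam/d$. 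Next, writing $U_{(1)}U_{(1)}'Y_{(2)}=U\Delta V'+E$ with $E=(U_{(1)}U_{(1)}'-UU')U\Delta V'+U_{(1)}U_{(1)}'Z_{(2)}$, the first part of $E$ has Frobenius norm at most $\|\sin\Theta(U_{(1)},U)\|_{F}\,\sigma_1(XA)\lesssim\sqrt s\lam$, and the second is handled by the built-in sample splitting: $U_{(1)}$ depends only on $Z_{(1)}$, so conditionally on $U_{(1)}$ the matrix $U_{(1)}'Z_{(2)}$ is $r\times m$ with i.i.d.\ $N(0,\tsigma^{2})$ entries, whence $\|U_{(1)}'Z_{(2)}\|_{F}\lesssim\sigma\sqrt{rm}$ with high probability; thus $\fnorm{E}\lesssim\sqrt s\lam+\sigma\sqrt{rm}$. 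A second application of the Frobenius $\sin\Theta$ theorem, comparing $U_{(1)}U_{(1)}'Y_{(2)}$ with $U\Delta V'$ (whose $r$th singular value $\delta_r>K^{-1}d$ again dominates $\fnorm E$ under the smallness hypotheses), yields $\|\sin\Theta(V_{(1)},V)\|_{F}\lesssim(\sqrt s\lam+\sigma\sqrt{rm})/d$, and in particular $\sigma_r(V'V_{(1)})$ bounded below, consistent with feeding $V_{(1)}$ into step~4. Using $AVV'=A$,
\begin{align*}
\fnorm{A(VV'-V_{(1)}V_{(1)}')} = \fnorm{AVV'(I-V_{(1)}V_{(1)}')} \le \opnorm{A}\,\|\sin\Theta(V_{(1)},V)\|_{F} \lesssim \gamma\,(\sqrt s\lam+\sigma\sqrt{rm}),
\end{align*}
so this summand is $O(s\lam^{2}+\sigma^{2}rm)$. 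Combining the two bounds with the rank-$r$ interpolation inequality and a union bound over the (at most three) high-probability events gives $L_q(A,\wh A)\lesssim r^{2/q-1}(s\lam^{2}+\sigma^{2}rm)$ with probability at least $1-3p^{-1}$.

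I expect the main obstacle to be keeping the dependence on $r$ and $m$ sharp while propagating the subspace error through the two successive singular value decompositions in steps~2 and~3: a naive operator-norm perturbation argument would cost an extra factor of $r$ in the second summand (through $\|\sin\Theta\|_{F}\le\sqrt r\,\opnorm{\sin\Theta}$), so one must use the Frobenius-norm version of Wedin's theorem together with the independence of $U_{(1)}$ from $Z_{(2)}$ provided by the sample splitting, which is what makes the clean $\chi^{2}$-type bound $\|U_{(1)}'Z_{(2)}\|_{F}\lesssim\sigma\sqrt{rm}$ available. The group Lasso estimates themselves are routine under Condition~\ref{cond-speigen}; the remaining bookkeeping is to verify that the smallness conditions in \eqref{eq:sparse-eig} and \eqref{eq:init-cond} indeed dominate the relevant perturbations so that each Wedin application lies in its valid regime.
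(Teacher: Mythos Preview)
Your proposal is correct and follows essentially the same route as the paper: the same decomposition $\wh A-A=(B_{(2)}-AV_{(1)})V_{(1)}'-A(VV'-V_{(1)}V_{(1)}')$, group Lasso bounds for steps~1 and~4, two successive Wedin applications for $U_{(1)}$ and $V_{(1)}$, and the sample-splitting $\chi^2$ bound on $\|U_{(1)}'Z_{(2)}\|_F$. The one tactical difference is in the second Wedin step: you compare $U_{(1)}U_{(1)}'Y_{(2)}$ to $XA=U\Delta V'$ and so carry an extra $(U_{(1)}U_{(1)}'-UU')XA$ term of size $\sqrt s\lambda$ in the numerator, whereas the paper compares to $U_{(1)}U_{(1)}'XA$ (which still has right singular subspace $V$, since left multiplication preserves row space when rank is preserved), so that only the noise appears in the numerator and the error in $U_{(1)}$ is absorbed into lower-bounding $\sigma_r(U_{(1)}U_{(1)}'XA)$ via Weyl; this yields the cleaner intermediate bound $\|V_{(1)}V_{(1)}'-VV'\|_F\lesssim\sigma(\sqrt{mr}+\sqrt{\log p})/d$, but since $s\lambda^2$ already enters from step~4 your extra term is harmless for the final result.
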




\subsection{Minimax Lower Bounds}

To assess the tightness of the error bounds in \prettyref{thm:upper},
we now provide minimax risk lower bounds for estimating $A$ under all loss functions in \eqref{eq:loss}.
\begin{theorem}
\label{thm:lower}
Let the observed $X,Y$ be generated by \eqref{eq:model} with $Z$ having i.i.d.~$N(0,\sigma^2)$ entries.
Suppose that the coefficient matrix $A\in \Theta(s,k,r,d,\gamma)$ for some $k\geq 2r$ and $s\geq 2r$ and that the $(2s)$-sparse Riesz constants of the design matrix $X$ satisfy 
$K^{-1}\leq \kappa_-(2s) \leq \kappa_+(2s) \leq K$ for some absolute constant $K>1$.
Then there exists a positive constant $c$ depending only on $\gamma$ and $\kappa_+(2s)$ such that the minimax risk for estimating $A$ satisfies
\begin{align}
	\label{eq:lower}
\inf_{\wh{A}} \sup_{\Theta}
\Expect L_q(A,\wh{A})
\geq c\sigma^2 \sth{ \pth{r^{2/q-1} \frac{d^2}{\sigma^2}} \wedge 
\qth{r^{2/q}(s+k) + r^{2/q-1} \pth{s\log \frac{\eexp p}{s} + k\log \frac{\eexp m}{k}}} },
\end{align}
for all $q\in [1,2]$.
\end{theorem}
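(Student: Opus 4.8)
The plan is to prove \eqref{eq:lower} by the standard reduction to a multiple-hypothesis testing problem, via Fano's inequality, producing the three quantities inside the bracket from three separate finite sub-families of $\Theta(s,r,d,\gamma)$ and then taking the largest of the resulting lower bounds. Two elementary facts do most of the work. Since $Y\mid A\sim N(XA,\sigma^2 I_{nm})$, the Kullback--Leibler divergence between the laws indexed by $A$ and $A'$ is $\tfrac{1}{2\sigma^2}\fnorm{X(A-A')}^2$, and because $A-A'$ has at most $2s$ nonzero rows this is at most $\tfrac{\kappa_+^2(2s)}{2\sigma^2}\fnorm{A-A'}^2\le\tfrac{K^2}{2\sigma^2}\fnorm{A-A'}^2$; hence every information quantity can be controlled through the Frobenius geometry alone, at the cost of the absolute constant $K$. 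Second, for any $M$ with $\rank(M)\le r$ and any $q\in[1,2]$,
\begin{align*}
\fnorm{M}^2\le\sqnorm{M}^2\le r^{2/q-1}\fnorm{M}^2,
\end{align*}
with equality on the right when the nonzero singular values of $M$ are equal. Therefore, if the competing hypotheses are constructed so that all their pairwise differences have rank of order $r$ with an essentially flat singular spectrum, their separation in $\sqnorm{\cdot}^2$ is a factor $r^{2/q-1}$ larger than their Frobenius separation; this is precisely what converts the Frobenius-type rates into the claimed Schatten-$q$ rates.

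Concretely, fix a support $J_0\subseteq[p]$ with $|J_0|=s\ge 2r$ and a ``skeleton'' $A_0\in\Theta$ whose $r$ singular values all equal $\tfrac{1+\gamma}{2}d$, and perturb it in three ways, each staying inside $\Theta(s,r,d,\gamma)$. \emph{(i) Rotating the right singular subspace.} Keeping the left factor and the singular values of $A_0$ fixed, let the right singular subspace range over a (probabilistically constructed, then pruned) local packing of $O(m,r)$ at angular scale $\theta$, of cardinality $\exp(c_1 rm)$, chosen so that all pairwise differences of the hypotheses have $\asymp r$ singular values of common order $\asymp d\theta$; since rotation does not change the singular values, $\theta$ is free, capped only by $\theta\lesssim 1$ (validity) and by $\fnorm{A_i-A_j}^2/\sigma^2\lesssim\log(\#)$, giving $\theta^2\asymp\min\{1,\sigma^2 rm/(K^2d^2)\}$; Fano then yields a lower bound of order $r^{2/q-1}d^2\theta^2\asymp\min\{r^{2/q-1}d^2,\ \sigma^2 r^{2/q}m/K^2\}$. \emph{(ii) Rotating the left factor on $J_0$.} Symmetrically, fix the right singular subspace and rotate the column space of the left factor $B\in\reals^{p\times r}$ (supported on $J_0$) over a packing of dimension $\asymp r(s-r)\asymp rs$; the identical computation gives $\min\{r^{2/q-1}d^2,\ \sigma^2 r^{2/q}s/K^2\}$. \emph{(iii) Moving the support.} Take a Varshamov--Gilbert family of $s$-subsets $J$ of $[p]$ with pairwise symmetric differences $\gtrsim s$ and $\log(\#)\gtrsim s\log(\eexp p/s)$ (the range $s\gtrsim p$ being subsumed by (ii)); on each $J$ put a rank-$r$ matrix most of whose mass sits on a fixed core of $\asymp r$ rows --- pinning the singular values near $\tfrac{1+\gamma}{2}d$ --- while its remaining $\asymp s$ rows carry a small tunable perturbation of magnitude $t$ lying in the core's row space and spread across the $r$ singular directions. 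Two such hypotheses differ on $\asymp s$ rows, so $\fnorm{A_J-A_{J'}}^2\asymp st^2$ and, by the spread, $\sqnorm{A_J-A_{J'}}^2\asymp r^{2/q-1}st^2$; with $t^2\asymp\min\{\sigma^2\log(\eexp p/s)/K^2,\ rd^2/s\}$ Fano gives $r^{2/q-1}\min\{\sigma^2 s\log(\eexp p/s)/K^2,\ rd^2\}\ge\min\{r^{2/q-1}d^2,\ \sigma^2 r^{2/q-1}s\log(\eexp p/s)/K^2\}$.

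Each of the three families yields a bound of the form $\min\{c\,r^{2/q-1}d^2,\ b_j\}$ with $b_1\asymp\sigma^2 r^{2/q}m$, $b_2\asymp\sigma^2 r^{2/q}s$, $b_3\asymp\sigma^2 r^{2/q-1}s\log(\eexp p/s)$. The minimax risk exceeds the largest of the three, and $\max_j\min\{a,b_j\}\ge\tfrac13\min\{a,\ b_1+b_2+b_3\}$ for nonnegative reals, so
\begin{align*}
\inf_{\wh A}\sup_{\Theta}\Expect L_q(A,\wh A)\ \ge\ c\,\min\Bigl\{r^{2/q-1}d^2,\ \sigma^2\bigl[r^{2/q}(s+m)+r^{2/q-1}s\log(\eexp p/s)\bigr]\Bigr\},
\end{align*}
which is \eqref{eq:lower}; tracking the constants shows $c$ depends only on $\gamma$ and (the bound $K$ on) $\kappa_+(2s)$.

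The main obstacle I anticipate is the \emph{Schatten boost}: each of the three packings must be engineered so that \emph{every} pairwise difference has rank comparable to $r$ with a nearly flat spectrum --- a naive packing can have (near) rank-one differences, for which $\sqnorm{\cdot}^2$ and $\fnorm{\cdot}^2$ agree and the gains $r^{2/q}$, $r^{2/q-1}$ are lost. Making the random local packings in (i)--(ii) survive pruning with full cardinality (which needs $m-r\gg r$, respectively $s-r\gg r$, for the difference matrices to be well-conditioned, so the marginal case $s\asymp 2r$ requires a structured substitute) and arranging the core/perturbation split in (iii) so that the perturbation magnitude is genuinely tunable are where the technical effort lies. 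A secondary point is bookkeeping the two caps on each perturbation radius --- the statistical cap $\fnorm{\cdot}\lesssim\sigma\sqrt{\log(\#)}$ and the geometric cap forced by $\rank(A)=r$ and $\sigma_i(A)\in(d,\gamma d]$, the latter being the origin of the clause $\wedge\,r^{2/q-1}d^2$ --- so that the three caps line up and the final $\min/\max$ juggling delivers exactly the sum inside the bracket of \eqref{eq:lower} rather than something weaker.
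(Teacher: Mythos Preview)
Your strategy is correct and mirrors the paper's at the top level: Fano's inequality applied to three sub-families of $\Theta(s,r,d,\gamma)$ that separately deliver the $r^{2/q}m$, $r^{2/q}s$, and $r^{2/q-1}s\log(\eexp p/s)$ pieces, each capped by $r^{2/q-1}d^2$, followed by $\max_j\min\{a,b_j\}\asymp\min\{a,\sum_j b_j\}$. The KL bound via $\kappa_+(2s)$ and the final combination are exactly as in the paper.

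Where you differ is in how you obtain the Schatten boost for the first two pieces. You propose explicit local packings of $O(m,r)$ (respectively of the left factor) and then prune to force all pairwise differences to have $\asymp r$ comparable singular values; you rightly flag this as the main technical obstacle, especially near $s=2r$ or $m\asymp r$. The paper bypasses this entirely with a \emph{volume-ratio} argument (\prettyref{lmm:oracle-lowbd}): it restricts to a Frobenius ball in $\reals^{s\times r}$ (or $\reals^{r\times m}$) centered at a scalar skeleton, and compares $\vol(B_{\rmS_2}(\sqrt{a}))$ to $\vol(B_{\rmS_q}(\epsilon))$ via the inverse Santal\'o and Urysohn inequalities, which encode the Schatten geometry directly through $\Expect\|Z\|_{\rmS_{q'}}\le r^{1/q'}\Expect\opnorm{Z}$. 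This yields the $r^{2/q}$ factors without ever constructing a flat-spectrum packing, so the issue you anticipate simply does not arise. For the third piece (moving the support), the paper's construction is close to yours: a probabilistic existence lemma (\prettyref{lmm:scatter}) produces a fixed $W\in\reals^{(s-r)\times r}$ whose restriction to any $c_0(s-r)$ rows has $\sqnorm{\cdot}\gtrsim r^{1/q-1/2}$, which is then planted on each support in a Varshamov--Gilbert family; this is essentially your ``core/perturbation'' idea made precise.

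In short, your plan would work, but the volume-ratio device buys a cleaner proof of the two oracle terms and eliminates the pruning you identify as delicate.
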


\begin{remark}
Comparing \prettyref{thm:upper} and \prettyref{thm:lower}, we find that they match up to a multiplicative log factor in general and up to a constant multiplier when $r$ is no smaller than $\log(p\vee m)$ in order.
Therefore, under the conditions of \prettyref{thm:upper}, \prettyref{algo:est} attains nearly optimal convergence rates adaptively for all losses in \prettyref{eq:loss}. 

%
As we have mentioned earlier, the one-way sparse reduced rank regression model considered in the literature, such as \cite{Chen12}, \cite{Bunea12}, \cite{she2014selectable} and \cite{MaSun14}, 
does not consider column sparsity in $A$ and can be viewed as a special case of model \eqref{eq:model} with $k=m$.
In view of the foregoing discussion, our estimator is also adaptive to this special case while retaining the ability of fully exploiting potential column sparsity.
\end{remark}

\section{Conclusion and Discussion}
\label{sec:conclusion}

In this paper, we have proposed a new Double Projected Penalization (DPP) estimator for the coefficient matrix in two-way sparse reduced-rank regression.
The model is well motivated by massive datasets arising in a number of application fields, especially genomics and neurosciences.
The proposed estimator is fast to compute and demonstrates competitive performance when compared with existing methods in simulation studies.
In addition, we have illustrated its potential use in neuroscience by applying it to the analysis of a calcium imaging dataset.
Last but not least, we have further justified its nice empirical performance by a decision-theoretic analysis when the data is Gaussian.

In terms of the DPP estimator, an interesting problem to be studied in future is to establish high probability error bounds when the data is not Gaussian.
Since one cannot easily generate independent samples in such cases, we anticipate that different proof techniques will be needed to achieve this goal.
In addition, it is worth noting that steps 3--4 of \prettyref{algo:est} can be iterated till certain convergence criterion is met.
Thus, we could also define an iterative projected penalization estimator. 
However, based on simulation results not reported here, we did not find significant performance gain by employing such an iterative scheme, which is more costly in terms of computation.


Another potential direction for future research is to consider certain nonlinear extensions of the model. 
When the response is univariate, researchers have considered sparse sliced inverse regression \citep{Li12,Lin15}. 
It would be of great interest to conduct analogous investigations for multiple responses where both low-rankness and sparsity are involved.

\section{Proofs}
\label{sec:proof}


\subsection{Proof of \prettyref{thm:upper}}
\paragraph{Analysis of $V_{(0)}$.}
We first study the property of the right singular vector matrix $V^{(0)}$ obtained in the column-thresholding step of Stage I.
For $0<a_-<1<a_+$, define
\bes
J_{(0)}^\pm = \Big\{j: \|XA_{*j}\|^2\ge \tsigma^2 a_\mp\alpha \sqrt{n\log (p\vee m)}) \Big\}.
\ees
More specifically, let $a_-=0.1$ and $a_+=2$ in  the proof. Recall that  $\alpha=\sqrt{12}$ and $\tsigma=2\sigma$.


\begin{lemma}\label{lmm:sel0}[Stage I column selection] With probabbility at least $1-4(p\vee m)^{-2}$,
\bes
J_{(0)}^- \subset J_{(0)} \subset J_{(0)}^+
\ees
\end{lemma}
 
\begin{proof}[Proof of \prettyref{lmm:sel0}] Due to Gaussianity, $\|Y_{*j}^{(0)}\|^2/\tsigma^2$ follows a non-central $\chi^2$ distribution with $n$ degrees of freedom and noncentrality parameter $\|XA_{*j}\|^2/\tsigma^2$. By \prettyref{lmm:chisq},
\bes
P(J_{(0)}^- \not\subset J_{(0)}) 
&\le&  \sum_{j \in J_{(0)}^-} P\Big\{ \|Y_{*j}^{(0)}\|^2 < \tsigma^2(n+\alpha \sqrt{n\log (p\vee m)}) \Big\} \\
&\le& m P\Big\{\|Y_{*j}^{(0)}\|^2 < \tsigma^2 n+\|XA_{*j}\|^2 -  \tsigma^2 \alpha(a_+-1) \sqrt{n\log (p\vee m)}\ \Big|\ j \in J_{(0)}^- \Big\} \\
&\le& 2m \exp\Big(-\frac{\alpha^2(a_+-1)^2 n\log (p\vee m)}{4(\sqrt{n}+ (a_+\alpha)^{1/2} (n\log (p\vee m))^{1/4})^2}\Big)\\
&\le& 2(p\vee m)^{-2}.
\ees
Similarly, it is proved that $J_{(0)}\subset J_{(0)}^+$ holds with probability at least $1- 2(p\vee m)^{-2}$.

\end{proof}

\begin{lemma}\label{lmm:chisq} 
Let $X$ follow a non-central chi-square distribution $\chi^2_n(\lam)$ with $n$ degrees of freedom and non-centrality parameter $\lam$. Then
\bes
&&P\Big\{X\ge (n+\lam)+ 2(\sqrt{n}+\sqrt{\lam})s\Big\} \le \Big(1+\frac{1}{\sqrt{2}s}\Big)\exp(-s^2), \quad 
\text{if } 0\le s\le \frac{1}{2}n^{9/16},\\
&&P\Big\{X\le (n+\lam)- 2(\sqrt{n}+\sqrt{\lam})s\Big\} \le 2\exp(-s^2), \quad 
\text{if } 0\le s\le \frac{1}{2}n^{1/2}.
\ees
\end{lemma}

\begin{lemma}\label{lmm:opnorm}
Let $X$ be an $n\times m$ matrix with iid standard Gaussian entries. Then for any $t>0$,
\bes
P\Big\{ \|X\|>\sqrt{n}+\sqrt{m}+t \Big\} \le \exp(-t^2/2).
\ees
\end{lemma}

\begin{lemma}\label{lmm:V0}[Stage I subspace estimation]
With probability at least $1-3(p\vee m)^{-2}$,
\bes
\|VV'-V_{(0)}{V_{(0)}}'\| &\le& \frac{C_1\tsigma}{d}\Big\{ \sqrt{n}+\sqrt{k}+2\sqrt{\log (p\vee m)} + \sqrt{k\sqrt{n\log (p\vee m)}}\Big\}, \\
\|VV'-V_{(0)}{V_{(0)}}'\|_F &\le& \frac{C_2\tsigma}{d}\Big\{ \sqrt{r}(\sqrt{n}+\sqrt{k}+2\sqrt{\log (p\vee m)}) + \sqrt{k\sqrt{n\log (p\vee m)}}\Big\}.
\ees
\end{lemma}

\begin{proof}[Proof of \prettyref{lmm:V0}] 
We study the upper bounds in the event where $J_{(0)}^- \subset J_{(0)} \subset J_{(0)}^+$ holds.
We may reorder the columns of matrices such that $XA - \Ytil^{(0)}$ is of the following form
\bes
XA - \Ytil^{(0)} = \begin{pmatrix}
-Z_{* J_{(0)} } &  UDV_{* J_{(0)}^c }'
\end{pmatrix}
\ees
\prettyref{lmm:opnorm} provides an upper bound for $\|Z_{* J_{(0)} }\|$ as follows
\bes
\|Z_{* J_{(0)} }\| \le \tsigma(\sqrt{n}+\sqrt{J_{(0)}}+2\sqrt{\log (p\vee m)}) \le \tsigma( \sqrt{n}+\sqrt{k}+2\sqrt{\log (p\vee m)})
\ees
with probability at least $1-(p\vee m)^2$, since $|J_{(0)}|\le |J_{(0)}^+| =k$. Moreover, it holds that, in the event of $J_{(0)}^- \subset J_{(0)}$,
\bes
\|U\Delta V_{* J_{(0)}^c }'\|^2 \le \|\Delta V_{* (J_{(0)}^-)^c }'\|_F^2 \le   \tsigma^2 a_-\alpha  k \sqrt{n\log (p\vee m)}.
\ees
Thus, we have
\bes
\|XA - \Ytil^{(0)}\|\le \tsigma( \sqrt{n}+\sqrt{k}+2\sqrt{\log (p\vee m)})+\tsigma \sqrt{a_-\alpha  k \sqrt{n\log (p\vee m)}}
\ees
and the desired results then follows from the $\sin\theta$ theorem.
\end{proof}

\paragraph{Analysis of $U_{(1)}$.}

\begin{lemma}\label{lmm:U1}[Stage I Regression]
Under the condition of \prettyref{thm:upper}, there exists a constant $C$ depending only on $\kappa_\pm(s_*), c_*$ and $c_0$, such that with probability at least $1-(p\vee m)^{-1}$,
\begin{align*}
\|U_{(1)}U_{(1)}'-UU'\|_F \leq C \sqrt{s}\lam/d.
\end{align*}
\end{lemma}

\begin{proof}[Proof of \prettyref{lmm:U1}] 
Let $U_*\in \reals^{n\times r}$ be the left singular vector matrix of $XAV_{(0)}=UDV'V_{(0)}$. 
Under condition \eqref{eq:init-cond}, $V'V_{(0)}$ is an $r\times r$ matrix of full rank, 
and so the column space of $U^*$ is the same as the column space of $U$; \ie, $U_*U_*'=UU'$.
By Wedin's $\sin \theta$ Theorem \citep{Wedin72}, 
\bes
\|U_{(1)}U_{(1)}'-UU'\|_F=\|U_{(1)}U_{(1)}'-U_*U_*'\|_F\le \frac{\|XB_{(1)}-XAV_{(0)}\|_F}{\sigma_r(XAV_{(0)})},
\ees
where $\sigma_r(XAV_{(0)})$ is the $r\Th$ singular value of $XAV_{(0)}$.

Since for any unit vector $x$,
\bes
 \|V'V_{(0)}x\|^2&=& x'V_{(0)}'VV'V_{(0)}x\\
&=& 1-x'V_{(0)}'(VV'-V_{(0)}V_{(0)}')V_{(0)}x\\
&\ge& 1- \|VV'-V_{(0)}V_{(0)}'\|.
\ees
Thus, we have
$\sigma^2_r(V'V_{(0)})=\min_{\|x\|=1} \|V'V_{(0)}x\|^2 \ge  1- \|VV'-V_{(0)}V_{(0)}'\|$. When $c_0$ is small enough, $\|VV'-V_{(0)}V_{(0)}'\|$ is sufficiently small by \prettyref{lmm:V0}. So there exists a constant $c'$ such that $ \sigma_r(V'V_{(0)}) >c'$.
Note that $XAV_{(0)} = XAVV' V_{(0)}$, and so
\begin{align*}
\sigma_r(XAV_{(0)}) \geq \sigma_r(XAV) \sigma_r(V'V_{(0)}) \geq  \delta_r c',
\end{align*}
where the last inequality holds under condition \eqref{eq:init-cond} since $\sigma_r(XAV) = \sigma_r(XA) = \delta_r$.
Further note that
\begin{align*}
\|XB_{(1)}-XAV_{(0)}\|_F \leq \kappa_+(2s)\| B_{(1)} - AV_{(0)}\|_F
\leq \kappa_+(s_*)\| B_{(1)} - AV_{(0)}\|_F
\end{align*}
and that $\delta_r \geq \kappa_-(s) \sigma_r(A)\geq \kappa_-(s_*) d$,
the desired result then follows from Part (ii) of \prettyref{thm:glasso} with $\eta=1/(p\vee m)$.
\end{proof}

\paragraph{Analysis of $V_{(1)}$.} 

Recall
\bes
J_{(1)} = J_{(0)} \cup \Big\{j: \|{U_{(1)}}'Y_{*j}^{(2)}\|^2\ge \beta \tsigma^2(r+ 2\sqrt {3r\log (p\vee m)}+6\log (p\vee m)) \Big\}.
\ees
For $b_-<b_+$, define
\bes
J_{(1)}^\pm = \Big\{j: \|XA_{*j}\|^2\ge \tsigma^2 b_\mp(r+2 \sqrt{3r\log (p\vee m)} +6\log (p\vee m)) \Big\}.
\ees
More specifically, let $b_+=4.5$ and $b_-=0.002$ in the proof. Recall that  $\beta=1.1$.


\begin{lemma}\label{lmm:chisq2}
Let $X$ follow a chi-square distribution $\chi^2_n$ with $n$ degrees of freedom. Then for any $t>0$
\bes
P(X> n+2\sqrt{n}t+2t^2)< \exp(-t^2)
\ees
\end{lemma}

\begin{lemma}\label{lmm:sel1}[Stage II column selection]
Assume $\|U_{(1)} U_{(1)}'-UU'\|<c$ for some small positive constant $c<0.05$.
With probabbility at least $1-2(p\vee m)^{-2}$,
\bes
J_{(1)}^- \subset J_{(1)} \subset J_{(1)}^+
\ees
\end{lemma}

\begin{proof}[Proof of \prettyref{lmm:sel1}]
For $j\in J_{(1)}^-\setminus J_{(0)}$,
\bes
\| U_{(1)}'Y_{*j}^{(2)}\| &=&  \| U_{(1)}'(UDV_{*j}' + Z_{*j}^{(2)})\| \cr
&\ge& \| U_{(1)}'UDV_{*j}'\| -  \| U_{(1)}'Z_{*j}^{(2)}\|
\ees 
The first term is
\bes
 \| U_{(1)}'UDV_{*j}'\|^2
 &\ge& \|XA_{*j}\|^2(1-\|U_{(1)} U_{(1)}'-UU'\|)\ge \|XA_{*j}\|^2(1-c)\\
&\ge&  \tsigma^2 (1-c)b_+(r+2 \sqrt{3r\log (p\vee m)} +6\log (p\vee m)) 
\ees
Since $ U_{(1)}'Z_{*j}^{(2)}\sim N(0,\tsigma^2I_r)$, it follows from \prettyref{lmm:chisq2} that
\bes
\| U_{(1)}'Z_{*j}^{(2)}\|^2 \le \tsigma^2(r+ 2\sqrt {3r\log  (p\vee m)}+6\log  (p\vee m)),
\ees
with probability at least $1- (p\vee m)^{-3}$. Thus, in the same event, we have
\bes
\| U_{(1)}'Y_{*j}^{(2)}\|
&\ge& (\sqrt{(1-c)b_+}-1)\tsigma\Big\{r+ 2\sqrt {3r\log(p\vee m)}+6\log(p\vee m)\Big\}^{1/2}\\
&\ge& \beta^{1/2}\tsigma(r+ 2\sqrt {3r\log(p\vee m)}+6\log(p\vee m))^{1/2},
\ees
due to $(\sqrt{(1-c)b_+}-1)^2>(\sqrt{0.95\times 4.5}-1)^2>1.1=\beta$.
Hence, we have $j\in J_{(1)}$. So it holds that $ J_{(1)}^-\subset  J_{(1)}$ with probability at least
$1-(p\vee m)^{-2}$. Similarly, we have $ J_{(1)}\subset  J_{(1)}^+$ with probability at least $1-(p\vee m)^{-2}$,  
due to $(\sqrt{(1+c)b_-}+1)^2<1.1=\beta$.
\end{proof}

\begin{lemma}\label{lmm:V1}[Stage II subspace estimation]
Suppose $\|U_{(1)}U_{(1)}'-UU'\|_F<c_1'$ for a sufficiently small positive constant $c_1'$.
Then there exists a constant $C$ depending only on $\kappa_\pm(s_*),  \gamma$ and $c_1'$ such that with probability at least $1-(p\vee m)^{-1}$,
\bes
\|V_{(1)} V_{(1)}'-VV'\|_F \le C\sigma\sqrt{(k+s)(r+\log (p\vee m)) }/d
\ees
\end{lemma}

\begin{proof}[Proof of \prettyref{lmm:V1}]
\bel{eq:upperV1-1}
\|V_{(1)} V_{(1)}'-VV'\|_F \le \frac{\|U_{(1)} U_{(1)}'\Ytil^{(1)}-U_{(1)} U_{(1)}'XA \|_F}{\sigma_r(U_{(1)} U_{(1)}'XA)}.
\eel
We first upper bound the numerator
\bel{eq:upperV1-2}
&&\|U_{(1)} U_{(1)}'\Ytil^{(1)}-U_{(1)} U_{(1)}'XA \|_F  \cr
&\le& \|U_{(1)}'(\Ytil^{(1)}_{*J_{(1)}}-XA_{*J_{(1)}}) \|_F + \|U_{(1)}U_{(1)}'XA _{*J_{(1)}^c} \|_F\cr
&\le& \|U_{(1)}'(\Ytil^{(1)}_{*J^{(1)}}-XA_{*J_{(1)}}) \|_F + \|(U_{(1)}U_{(1)}'-UU')XA _{*J_{(1)}^c} \|_F + \| UU' XA_{*(J_{(1)}^-)^c}\|_F \cr
&\le& \tsigma(\sqrt{rk}+\sqrt{\log (p\vee m)}) + d \|(U_{(1)}U_{(1)}'-UU'\| +  \tsigma\sqrt{k}\sqrt{b_+(r+2 \sqrt{3r\log (p\vee m)} +6\log (p\vee m))}\cr
&\le& C\sigma\sqrt{(k+s)(r+\log (p\vee m)) }
\eel
To lower bound the denominator, we apply Weyl's theorem to obtain
\begin{align*}
\sigma_r(U_{(1)}U_{(1)}'XA) 
& \geq \sigma_r(UU'XA) - \opnorm{U_{(1)}U_{(1)}'XA - UU'XA} \\
& \geq \delta_r - \opnorm{U_{(1)}U_{(1)}' -UU'}\opnorm{XA}.
\end{align*}
Note that $\delta_r \geq \kappa_-(s_*)d$, $\opnorm{XA}\leq \kappa_+(s_*)\gamma d$ and that $\opnorm{U_{(1)}U_{(1)}' -UU'}\leq \| U_{(1)}U_{(1)}' -UU'\|_F\leq c_1'$.
Thus, for sufficiently small value of $c_1'$, we obtain
\begin{align}
		\label{eq:upperV1-3}
\sigma_r(U_{(1)}U_{(1)}'XA) \geq C^{-1} d,
\end{align}
where $C>0$ is a constant depending only on $\kappa_\pm(s_*),\gamma$ and $c_1'$.
Combining \eqref{eq:upperV1-1} -- \eqref{eq:upperV1-3}, we complete the proof. 
\end{proof}

\paragraph{Proof of \prettyref{thm:upper}.}
\begin{proof} By the definition of $\wh{A}$, we have
\bes
\|\wh{A} - A \|_F &=&\|B_{(2)} V_{(1)}' - AVV' \|_F\cr
&\le& \|B_{(2)} V_{(1)}' - A V_{(1)} V_{(1)}' \|_F + \| A V_{(1)} V_{(1)}' - AVV' \|_F \cr
&\le& \opnorm{V_{(1)}}\| B_{(2)} - AV_{(1)} \|_F + \opnorm{A} \| V_{(1)} V_{(1)}' - VV' \|_F.
\ees
Assembling the bounds in all lemmas, 
\bel{pf-upper-fro}
\|\wh{A} - A \|_F^2 &\lesssim& \sigma^2(k+s)(r+\log (p\vee m))
\eel
The desired upper bound on other Schatten norm losses is a consequence of (\ref{pf-upper-fro}) and the inequality $\sqnorm{\wh{A}-A}^2 \leq (2r)^{2/q-1} \|\wh{A}-A\|_F^2$ for all $q\in [1,2]$.

\end{proof}

\subsection{Proof of \prettyref{thm:lower}}
\label{sec:proof-lower}

\newcommand{\dKL}{d_{\rm KL}}
\newcommand{\vol}{\mathrm{vol}}

\newcommand{\rmS}{{\rm S}}

For any probability distributions $P$ and $Q$, let $D(P||Q)$ denote the Kullback--Leibler divergence of $Q$ from $P$.
For any subset $K$ of $\reals^{m\times n}$, the volume of $K$ is 
$\vol(K) = \int_K \diff \mu$ where $\diff\mu$ is the usual Lebesgue measure on $\reals^{m\times n}$ by taking the product measure of the Lebesgue measures of individual entries.
With these definitions,
we state the following variant of Fano's lemma \citep{IKbook,Birge83,Tsybakov09}.
This version has been established as Proposition 1 in \cite{MaWu13}.
It will be used repeatedly in the proof of the lower bounds.
Throughout the proof, we denote $\kappa_+(2s)$ by $\kappa_+$.

\begin{proposition}
\label{prop:fano}
	Let $(\Theta, \rho)$ be a metric space and $\{P_\theta: \theta \in \Theta\}$ a collection of probability measures. 	For any totally bounded $T \subset \Theta$, denote by $\calM(T, \rho,\epsilon)$ the $\epsilon$-packing number of $T$ with respect to $\rho$, \ie, the maximal number of points in $T$ whose pairwise minimum distance in $\rho$ is at least $\epsilon$.
Define the \emph{Kullback-Leibler diameter} of $T$ by
	\begin{equation}
	\dKL(T) \triangleq \sup_{\theta,\theta' \in T} \KL{P_\theta}{P_{\theta'}}.
	\label{eq:dKL}
\end{equation}
	 Then
	\begin{equation}
	\inf_{\hat{\theta}} \sup_{\theta \in \Theta} \Expect_{\theta}[\rho^2(\hat{\theta}(X),\theta)] \geq \sup_{T \subset \Theta} \sup_{\epsilon > 0} \frac{\epsilon^2}{4} \pth{1 -  
	\frac{\dKL(T) + \log 2}{\log \calM(T,\rho,\epsilon)}}.
	\label{eq:fano}
\end{equation}
In particular, if $\Theta \subset \reals^d$ and 
$\norm{\cdot}$ is some norm on $\reals^d$, then 
\begin{equation}
	\inf_{\hat{\theta}} \sup_{\theta \in \Theta} \Expect_{\theta}[\|\hat{\theta}(X) - \theta\|^2] \geq \sup_{T \subset \Theta} \sup_{\epsilon > 0} \frac{\epsilon^2}{4} \pth{1 -  
	\frac{\dKL(T) + \log 2}{\log \frac{\vol(T)}{\vol(B_{\|\cdot\|}(\epsilon))}}}.
	\label{eq:fano-vol}
\end{equation}
\end{proposition}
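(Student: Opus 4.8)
The plan is to prove \eqref{eq:fano} by the classical reduction from estimation to multiple hypothesis testing, and then to deduce \eqref{eq:fano-vol} from it through a volumetric lower bound on the packing number. Fix any totally bounded $T\subset\Theta$ and any $\epsilon>0$; since the right-hand sides of \eqref{eq:fano}--\eqref{eq:fano-vol} are suprema over such pairs, it suffices to establish each bound for one fixed pair. Let $\{\theta_1,\dots,\theta_M\}\subset T$ be a maximal $\epsilon$-separated set, so that $M=\calM(T,\rho,\epsilon)$ and $\rho(\theta_i,\theta_j)\ge\epsilon$ for $i\ne j$. For an arbitrary estimator $\hat\theta$, define the test $\Psi(X)=\argmin_{i\in[M]}\rho(\hat\theta(X),\theta_i)$ (ties broken arbitrarily). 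The geometric core of the argument is this: if the true parameter is $\theta_j$ and $\Psi(X)=i\ne j$, then $\rho(\hat\theta,\theta_i)\le\rho(\hat\theta,\theta_j)$, hence $\epsilon\le\rho(\theta_i,\theta_j)\le\rho(\hat\theta,\theta_i)+\rho(\hat\theta,\theta_j)\le 2\rho(\hat\theta,\theta_j)$, so $\rho(\hat\theta,\theta_j)\ge\epsilon/2$. By Markov's inequality, $\Expect_{\theta_j}[\rho^2(\hat\theta,\theta_j)]\ge\frac{\epsilon^2}{4}P_{\theta_j}(\Psi\ne j)$ for every $j$, and averaging over $j$ gives $\sup_{\theta\in\Theta}\Expect_\theta[\rho^2(\hat\theta,\theta)]\ge\frac{\epsilon^2}{4}\bar p_e$, where $\bar p_e=\frac1M\sum_{j=1}^M P_{\theta_j}(\Psi\ne j)$ is the average error probability of $\Psi$ under the uniform prior on $\{\theta_1,\dots,\theta_M\}$.

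Next I would bound $\bar p_e$ from below by Fano's inequality. Put $\theta$ uniform on $\{\theta_1,\dots,\theta_M\}$ and $X\mid\theta\sim P_\theta$. Since $\Psi$ is a function of $X$, $H(\theta\mid X)\le H(\theta\mid\Psi)$, and Fano's inequality for the test $\Psi$ gives $H(\theta\mid\Psi)\le h(\bar p_e)+\bar p_e\log(M-1)$, where $h$ is the binary entropy function. Combining with $H(\theta\mid X)=\log M-I(\theta;X)$, $h(\bar p_e)\le\log2$, and $\log(M-1)\le\log M$ yields $\bar p_e\ge1-\frac{I(\theta;X)+\log2}{\log M}$. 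It remains to replace $I(\theta;X)$ by the Kullback--Leibler diameter: writing $\bar P=\frac1M\sum_i P_{\theta_i}$ for the marginal law of $X$, one has $I(\theta;X)=\frac1M\sum_j\KL{P_{\theta_j}}{\bar P}\le\frac1M\sum_j\frac1M\sum_i\KL{P_{\theta_j}}{P_{\theta_i}}\le\dKL(T)$, where the first inequality is the convexity of $Q\mapsto\KL{P}{Q}$. Chaining the last three displays and taking $\sup_T\sup_\epsilon$ establishes \eqref{eq:fano}.

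To obtain \eqref{eq:fano-vol}, specialize to $\Theta\subset\reals^d$ with $\rho=\|\cdot\|$ a norm and bound the packing number by a volume ratio. The maximal $\epsilon$-separated set $\{\theta_1,\dots,\theta_M\}$ is automatically an $\epsilon$-covering of $T$: if some $\theta\in T$ had $\|\theta-\theta_i\|\ge\epsilon$ for all $i$, then $\{\theta_1,\dots,\theta_M,\theta\}$ would still be $\epsilon$-separated, contradicting maximality. Hence $T\subseteq\bigcup_{i=1}^M B_{\|\cdot\|}(\theta_i,\epsilon)$, and since Lebesgue volume is translation invariant, $\vol(T)\le M\,\vol(B_{\|\cdot\|}(\epsilon))$, \ie, $\calM(T,\|\cdot\|,\epsilon)\ge\vol(T)/\vol(B_{\|\cdot\|}(\epsilon))$. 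Replacing $\log\calM(T,\|\cdot\|,\epsilon)$ in the denominator of \eqref{eq:fano} by the smaller quantity $\log\big(\vol(T)/\vol(B_{\|\cdot\|}(\epsilon))\big)$ only decreases the right-hand side, which gives \eqref{eq:fano-vol}. I do not expect a genuine obstacle here: the proof is a repackaging of standard ingredients (the Fano method as in \cite{Tsybakov09} together with the elementary packing/covering volume bound), and the only care needed is bookkeeping --- well-definedness of the $\argmin$ test, consistent use of the ``$\ge\epsilon$'' separation convention, and the observation that a maximal packing is also a covering. This is precisely why the paper imports the statement verbatim as Proposition 1 of \cite{MaWu13}.
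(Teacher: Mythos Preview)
Your proof is correct and follows the standard Fano-method derivation. Note, however, that the paper does not actually supply its own proof of this proposition: immediately before the statement it says ``This version has been established as Proposition 1 in \cite{MaWu13},'' and no argument is given in the present paper. What you have written is essentially the proof one would find in \cite{MaWu13} (or, in pieces, in \cite{Tsybakov09}): reduction from estimation to testing via the nearest-hypothesis rule, Fano's inequality with the mutual-information term bounded by the KL diameter through convexity, and then the maximal-packing-is-a-covering volume argument for \eqref{eq:fano-vol}. You correctly anticipated this at the end of your proposal.
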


We first prove an oracle version of the lower bound. 
One can think of it as an lower bound for the minimax risk when we know that the nonzero entries of the coefficient matrix $A\in \reals^{p \times m}$ are restricted to the top--left $s\times r$ block (or the top left $r\times k$ block). 
\begin{lemma}
\label{lmm:oracle-lowbd}
Let 
$\Theta_0(s,r,r,d,\gamma) \subset \Theta(s,k,r,d,\gamma)$
be the sub-collection of all matrices whose nonzero entries are in the top left $s\times r$ block.
Suppose $\sigma = 1$. 
There exists a positive constant $c$ that depends only on $\kappa_+$ and $\gamma$, such that
for any $q\in [1,2]$, the minimax risk for estimating $A$ over $\Theta_0$ satisfies
\begin{align*}
\inf_{\wh{A}} \sup_{\Theta_0} \Expect L_q(A,\wh{A})
\geq c \qth{(r^{2/q-1} d^2) \wedge  (r^{2/q} s)}.
\end{align*}

Similarly, let $\Theta_0'(r,k,r,d,\gamma) \subset \Theta(s,k,r,d,\gamma)$
be the sub-collection of all matrices whose nonzero entries are in the top left $r\times k$ block. Under the same conditions, we have 
\begin{align*}
\inf_{\wh{A}} \sup_{\Theta_0'} \Expect L_q(A,\wh{A})
\geq c \qth{(r^{2/q-1} d^2) \wedge  (r^{2/q} k)}.
\end{align*}
\end{lemma}
\begin{proof}
In what follows, we focus on proving the first claim and the second claim follows from essentially the same argument.

By a simple sufficiency argument, we can reduce to model \eqref{eq:model} with $p = s$ and $m = r$, which we assume in the rest of this proof without loss of generality.

Let $A_0 = \diag(1,\dots,1)\in \reals^{s\times r}$.
Moreover, for any $\delta$ and any $q\in [1,2]$, let
$B_{{\rm S}_q}(\delta) = \{{A} \in \reals^{s\times r}: \sqnorm{{A}} \leq \delta\}$ denote the Schatten-$q$ ball with radius $\delta$ in $\reals^{s\times r}$.
For some constant $a>0$ to be specified later, define
\begin{align}
	\label{eq:oracle-lowbd-space}
T(a) = \frac{\gamma d}{2} A_0 + B_{{\rm S}_2}(\sqrt{a})
= \sth{\frac{\gamma d}{2}A_0 + M: M \in B_{{\rm S}_2}(\sqrt{a})}.
\end{align}
For any $A_1, A_2\in T(a)$, we have
\begin{align*}
D(P_{A_1} || P_{A_2}) & = \frac{1}{2}\norm{X A_1 - X A_2}_{\rmS_2}^2 
\leq \frac{1}{2}\opnorm{X}^2 \norm{A_1-A_2}_{\rmS_2}^2
\leq 2\kappa_+^2 a.
\end{align*}
Here, the last inequality holds since $\opnorm{X}\leq \kappa_+$ under the assumption that $X\in \reals^{s\times r}$ and $\norm{A_1-A_2}_{\rmS_2}^2\leq 4a$ by definition \eqref{eq:oracle-lowbd-space}.
So
\begin{align}
	\label{eq:oracle-lowbd-dKL}
\dKL(T(a)) \leq  2\kappa_+^2 a.
\end{align}
By the inverse Santalo's inequality (see, \eg, Lemma 3 of \cite{MaWu13}), for some universal constants $c_0$,
\begin{align}
\vol(T(a))^{1\over sr} & = \vol(B_{\rmS_2}(\sqrt{a}))^{1\over sr} 
= \sqrt{a}\cdot \vol(B_{\rmS_2}(1))^{1\over sr} 
\nonumber \\
& \geq \sqrt{a}\cdot \frac{c_0}{\Expect \norm{{Z}}_{\rmS_2}}
\label{eq:vol-lowbd-1}\\
& \geq \sqrt{a}\cdot \frac{c_0'}{\sqrt{sr}}.
\label{eq:vol-lowbd-2}
\end{align}
In \eqref{eq:vol-lowbd-1}, ${Z}$ is a $s\times r$ matrix with i.i.d.~$N(0,1)$ entries.
The inequality in \eqref{eq:vol-lowbd-2} holds since by Jensen's inequality, $\Expect\norm{{Z}}_{\rmS_2} \leq \sqrt{\Expect\norm{{Z}}_{\rmS_2}^2} = \sqrt{sr}$.

On the other hand, by Urysohn's inequality (see, \eg, Eq.(19) of \cite{MaWu13}), for any $\epsilon > 0$ and $q\in [1,2]$,
\begin{align*}
\vol(B_{\rmS_q}(\epsilon))^{1\over sr} & \leq \frac{\epsilon \Expect \norm{{Z}}_{\rmS_{q'}}}{\sqrt{sr}} 
\leq \frac{\epsilon r^{\frac{1}{q'}} \Expect \opnorm{{Z}}}{\sqrt{sr}}
\leq 2\epsilon r^{\frac{1}{2} - \frac{1}{q}}.
\end{align*}
Here, $\frac{1}{q'} + \frac{1}{q} = 1$ and ${Z}$ is a $s\times r$ matrix with i.i.d.~$N(0,1)$ entries.
The last inequality is due to Gordon's inequality (see, \eg, \cite{Davidson01}): $\Expect\opnorm{{Z}}\leq \sqrt{s}+\sqrt{r}\leq 2\sqrt{s}$.

Now let 
\begin{align}
	\label{eq:oracle-lowbd-eps}
a = \pth{\frac{\gamma\wedge 2-1}{2}}^2 \pth{sr \wedge d^2},\quad \mbox{and}
\quad 
\epsilon = \frac{c_0'}{2\kappa_+} \sqrt{a}\, r^{\frac{1}{q}-\frac{1}{2}}.
\end{align}
Then for any ${A}\in T(a)$ and any $i\in [r]$, 
$|\sigma_i({A}) - \frac{\gamma}{2}d|\leq \sqrt{a} \leq \frac{\gamma\wedge 2 - 1}{2}d$, and so $\sigma_i({A})\in [d, \gamma d]$ and $T(a)\subset \Theta_0(s,r,d,\gamma)$.
Applying \prettyref{prop:fano} with $T(a)$ and $\epsilon$ in \eqref{eq:oracle-lowbd-space} and \eqref{eq:oracle-lowbd-eps}, we obtain a lower bound on the order of $\epsilon^2$.
This completes the proof.
\end{proof}


\begin{lemma}
\label{lmm:scatter}
Let $s\geq r$ be positive integers.
There exist a matrix ${W}\in \reals^{s\times r}$ and two absolute constants $c_0\in (\frac{1}{2},1)$ and $c_1 > 0$ 
such that $\fnorm{{W}}\leq 1$ and for any subset $B\subset [s]$ such that $|B|\geq c_0 s$, $\sqnorm{{W}_{B*}}\geq c_1 r^{\frac{1}{q}-\frac{1}{2}}$ for any $q\in [1,2]$.
\end{lemma}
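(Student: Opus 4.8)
The plan is to exhibit one explicit matrix $W$ and then reduce the claimed Schatten-norm bound to two elementary facts about its row sub-matrices: that $\fnorm{W_{B*}}$ stays bounded below by an absolute constant, and that $\opnorm{W_{B*}}$ is bounded above by $r^{-1/2}$. The construction I would use is a scaled, \emph{tiled} identity. Set $k=\floor{s/r}\ge 1$; define $\wt W\in\reals^{s\times r}$ by declaring, for each $i\le kr$ written uniquely as $i=(a-1)r+b$ with $a\in[k]$, $b\in[r]$, that row $i$ has the single nonzero entry $\wt W_{ib}=1/\sqrt k$, and by setting rows $i>kr$ equal to $0$; finally put $W=r^{-1/2}\wt W$. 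The columns of $\wt W$ have pairwise disjoint supports and unit norm, so $\wt W'\wt W=I_r$; hence $\fnorm{W}^2=r^{-1}\fnorm{\wt W}^2=1$ (so $\fnorm{W}\le 1$) and $\opnorm{W}=r^{-1/2}$, while each nonzero row of $W$ has squared norm exactly $(rk)^{-1}$.

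The key inequality is the following bound, valid for any matrix $M$ with singular values $\tau_1\ge\tau_2\ge\cdots$ and any $q\in[1,2]$: since $0\le\tau_i\le\opnorm{M}$ and $q-2\le 0$, writing $\tau_i^q=\tau_i^2\,\tau_i^{q-2}$ and summing over the nonzero $\tau_i$ yields
\[
\sum_i \tau_i^q \;\ge\; \opnorm{M}^{q-2}\sum_i\tau_i^2 \;=\; \opnorm{M}^{q-2}\fnorm{M}^2,
\qquad\text{i.e.}\qquad
\sqnorm{M}\;\ge\;\fnorm{M}^{2/q}\,\opnorm{M}^{(q-2)/q}.
\]
Applying this to $M=W_{B*}$: because $W_{B*}$ is a row sub-matrix of $W$ we have $\opnorm{W_{B*}}\le\opnorm{W}=r^{-1/2}$, and since the exponent $(q-2)/q$ is non-positive this gives $\opnorm{W_{B*}}^{(q-2)/q}\ge r^{(2-q)/(2q)}=r^{1/q-1/2}$, uniformly in $q\in[1,2]$.

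It then remains only to pin down $c_0$ so that $\fnorm{W_{B*}}$ does not degenerate. I would take $c_0=7/8\in(\tfrac12,1)$. For $|B|\ge c_0 s$ one has $|B^c|\le s/8$, and using $\floor{x}\ge x/2$ for $x\ge 1$ one gets $s\le 2kr$, whence $|B\cap[kr]|\ge kr-|B^c|\ge kr-s/8\ge \tfrac34 kr$; since each such row contributes $(rk)^{-1}$ to $\fnorm{W_{B*}}^2$ we obtain $\fnorm{W_{B*}}^2\ge 3/4$. Combining the two displays above, $\sqnorm{W_{B*}}\ge(3/4)^{1/q}r^{1/q-1/2}\ge \tfrac34\,r^{1/q-1/2}$ for every $q\in[1,2]$ (using $q\ge1$ and $3/4<1$), so the lemma holds with $c_1=3/4$. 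The one place I would be careful is the bookkeeping when $r$ is close to $s$ (so $k=1$ and a block of rows of $\wt W$ vanishes): one must verify that removing a $(1-c_0)$-fraction of rows can never destroy too much of the active block $[kr]$, which is precisely the role of the estimate $s\le 2kr$, and that the constants $c_0,c_1$ produced are genuinely absolute. (Alternatively $\wt W$ could be taken to be $r$ columns of a flat orthonormal system, e.g.\ the real Fourier basis of $\reals^s$, which avoids the padding at the price of a standard existence fact; the rest of the argument is identical.)
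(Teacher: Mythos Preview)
Your proof is correct and takes a genuinely different route from the paper. The paper argues probabilistically: for $s\ge 25$ it draws a Gaussian matrix $Z\in\reals^{s\times r}$, controls $\fnorm{Z}$ by Laurent--Massart and $\sigma_r(Z_{B*})$ by the Davidson--Szarek bound, takes a union bound over all $B$ of size $c_0 s$, and concludes by the inequality $\sqnorm{W_{B*}}\ge r^{1/q}\sigma_r(W_{B*})$; it then handles $s<25$ and $r$ close to $s$ by separate ad hoc constructions. Your tiled-identity construction is explicit and case-free, and the key step is a different (and sharper for this purpose) interpolation inequality, $\sqnorm{M}\ge \fnorm{M}^{2/q}\opnorm{M}^{(q-2)/q}$, combined with the trivial fact that row sub-matrices do not increase the operator norm. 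This buys you explicit absolute constants ($c_0=7/8$, $c_1=3/4$) and avoids all probabilistic machinery. What the paper's argument buys in exchange is a slightly stronger conclusion it does not actually use: by going through $\sigma_r$, it shows $W_{B*}$ has full column rank with all singular values of the same order, whereas your $W_{B*}$ can be rank-deficient (e.g.\ when $|B^c|$ wipes out all $k$ rows supporting a single column); this does not matter for the lemma as stated.
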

\begin{proof}
We divide the proof into two cases, namely when $s\geq 25$ and when $s<25$.

$1^\circ$ When $s\geq 25$,
let ${Z} \in \reals^{s\times r}$ have i.i.d.~$N(0,1)$ entries.
Then $\fnorm{{Z}}^2\sim \chi^2_{sr}$, and \citet[Eq.(4.3)]{Laurent00}
implies that 
\begin{align*}
\Prob\sth{\fnorm{{Z}}^2 \geq sr + 2s\sqrt{r} + 2s} \leq \eexp^{-s}.
\end{align*}
Moreover, for any $c_0 > \frac{1}{2}$,
\begin{align*}
& \Prob\sth{\exists B\subset [s],\, \mbox{s.t.}\, |B| = c_0s\,\,\,\mbox{and}\,\,\, \sigma_r({Z}_{B*}) < \sqrt{c_0 s} - \sqrt{r} - \frac{1}{2}\sqrt{c_0 s}}\\
& \leq \sum_{B\subset [s], |B| = c_0 s}
\Prob\sth{\sigma_r({Z}_{B*}) < \sqrt{c_0 s} - \sqrt{r} - \frac{1}{2}\sqrt{c_0 s}}\\
& \leq {s \choose (1-c_0) s} \eexp^{-c_0 s/4}\\
& \leq \exp\sth{-s\qth{\frac{c_0}{4} + (1-c_0)\log(1-c_0)}}.
\end{align*}
Here, the first inequality is due to the union bound, the second inequality is due to the Davidson-Szarek bound, and the last inequality holds since for any $\alpha \in (\frac{1}{2},1)$, ${s\choose \alpha s}={s\choose (1-\alpha) s}\leq ({\eexp \over 1-\alpha})^{(1-\alpha) s}$.
If we set $c_0 \geq 0.96$, then the multiplier $\frac{c_0}{4} + (1-c_0)\log(1-c_0) \geq 0.1$. 

So when $c_0 = 0.96$ and $s\geq 25$, 
the sum of the right hand sides of the last two displays is less than $1$.
Thus, there exists a deterministic matrix ${Z}_0$ on which both events happen.
Now define ${W} = {Z}_0 / \fnorm{{Z}_0}$. 
Then $\fnorm{{W}} = 1$ by definition, and for any $B\subset [s]$ with $|B| = c_0 s$, 
\begin{align*}
\sqnorm{{W}_{B*}} & \geq r^{1/q} \sigma_r({W}_{B*}) \\
& = r^{1/q} \sigma_r(({Z}_0)_{B*}) / \fnorm{{Z}_0}\\
& \geq r^{1/q} \frac{\frac{1}{2}\sqrt{c_0 s} - \sqrt{r}}{\sqrt{sr + 2s\sqrt{r}+2r}}\\
& \geq c_1 r^{1/q-1/2}.
\end{align*}
Note that the last inequality holds with an absolute constant $c_1$ when $r \leq \frac{1}{8}c_0 s$.
When $r > \frac{1}{8}c_0 s$, we can always let $\tilde{r} = \frac{1}{8}c_0 r \leq \frac{1}{8}c_0 s$ and repeat the above arguments on the $s\times \tilde{r}$ submatrix of ${Z}$ consisting of its first $\tilde{r}$ columns, and the conclusion continues to hold with a modified absolute constant $c_1$.
This completes the proof for all subsets $B$ with $|B| = c_0 s$. 
The claim continues to hold for all $|B|\geq c_0 s$ since the Schatten-$q$ norm of a submatrix is always no smaller than the the whole matrix.

$2^\circ$
When $s< 25$, we have $r< 25$ since $r\leq s$ always holds. Let ${W} = \begin{bmatrix}
	\frac{1}{\sqrt{s}}\mathbf{1}_s & \bszero
\end{bmatrix}\in \reals^{s\times r}$, \ie, the first column of ${W}$ consists of $s$ entries all equal to $1/\sqrt{s}$ and the rest are all zeros. 
So ${W}$ is rank one.
It is straightforward to verify the desired conclusion holds since for any $B\subset [s]$, $\sqnorm{{W}_{B *}} = \fnorm{{W}_{B *}} = \sqrt{|B|/s}$.
This completes the proof.
\end{proof}

\begin{lemma}
\label{lmm:packing}
Let $a = d^2 \wedge s\log\frac{\eexp p}{s}$.
There exist three positive constants $c_1, c_2, c_3$ that depend only on $\gamma$ and $\kappa_+$, 
and a subset $\Theta_1 \subset \Theta(s,k,r,d,\gamma)$, such that
$c_3\leq c_2/3$, $\dKL(\Theta_1)\leq c_3 a$
and that for any $q\in [1,2]$,
\begin{align*}
\log \calM(\Theta_1, \sqnorm{\cdot}, c_1 \sqrt{a}\, r^{1/q-1/2} ) \geq c_2 s\log\frac{\eexp p}{s},
\end{align*}
where $\dKL$ is the Kullback--Leibler diameter and $\calM$ is the packing number defined in \prettyref{prop:fano}.

Similarly, for $b=d^2\wedge k\log\frac{\eexp m}{k}$, there is another subset $\Theta'\subset \Theta(s,k,r,d,\gamma)$ such that $\dKL(\Theta_1')\leq c_3 b$ and that for any $q\in [1,2]$,
\begin{align*}
\log \calM(\Theta_1', \sqnorm{\cdot}, c_1 \sqrt{b}\, r^{1/q-1/2} ) \geq c_2 k\log\frac{\eexp m}{k}.
\end{align*}
\end{lemma}
\begin{proof}
Let us focus on the first claim and we shall remark on how to establish the second claim at the end of this proof.

Let ${W}\in \reals^{(s-r)\times r}$ satisfy the conclusion of \prettyref{lmm:scatter} and define $s_0 = (1-c_0) (s-r)$.
Let $\calB = \sth{B_1,\dots, B_N}$ be a maximal set consisting of subsets of $[p]\backslash [r]$ with cardinality $s-r$ and for any $B_i\neq B_j$, $|B_i\cap B_j|\leq s_0$.
By Lemma A.3 of \cite{Rigollet11} and Lemma 2.9 of \cite{Tsybakov09}, there exists an absolute positive constant $c_2'$
such that
\begin{align*}
\log{N} \geq c_2' (s-r)\log\frac{\eexp (p-r)}{s-r}.
\end{align*}
Now for each $B_i\in \calB$, define ${W}^{(i)}\in \reals^{m\times n} $ by setting the submatrix ${W}^{(i)}_{B_i [r]} = {W}$ and filling the remaining entries with zeros.
Then for any $i\neq j$, $|B_i\cap B_j|\leq s_0$, and so there exists a set $B_{ij}\subset [s]$ with $|B_{ij}|\geq s-r-s_0 = c_0 (s-r)$, such that 
\[
\sqnorm{{W}^{(i)} - {W}^{(j)}} \geq \sqnorm{{W}_{B_{ij} *}} \geq c_1' r^{1/q-1/2},
\]
where $c_1'$ is an absolute constant due to \prettyref{lmm:scatter}.

Define $M_0 = \begin{bmatrix}
I_r & 0 \\ 0 & 0
\end{bmatrix} \in \reals^{p\times m}$ and for some positive constant $c''_1 \leq \frac{\gamma\wedge 2-1}{2} \wedge \sqrt{\frac{c_2'}{6\kappa_+^2}}$, let
\[
\Theta_1 = \sth{A^{(i)} = \frac{\gamma d}{2}M_0 +  c''_1 \sqrt{a}\, {W}^{(i)}:   i=1,\dots, N}.
\]
Note that 
each $A^{(i)}$ has $s$ nonzero rows and $r$ nonzero columns.
Moreover, for $i\in [N]$, and $j\in [r]$
\begin{align*}
\left|\sigma_j(A^{(i)}) - \sigma_j(\frac{\gamma d}{2}M_0) \right| 
\leq \opnorm{A^{(i)} - \frac{\gamma d}{2} M_0} 
= c_1''\sqrt{a} \opnorm{{W}^{(i)}} 
\leq c_1''\sqrt{a}\fnorm{{W}^{(i)}} \leq \frac{\gamma\wedge 2 - 1}{2}d.
\end{align*}
Here, the second last inequality holds since
$\opnorm{{W}^{(i)}} \leq \fnorm{{W}^{(i)}}\leq 1$,
and the last inequality holds since $c_1''\leq \frac{\gamma\wedge 2-1}{2}$ and $\sqrt{a}\leq d$.
Since $\sigma_j(\frac{\gamma d}{2}M_0) = \frac{\gamma d}{2}$ for all $j\in [r]$, and so $\sigma_j(A^{(i)})\in [d, \gamma d]$ for all $j\in [r]$ and $i\in [N]$.
Thus, $\Theta_1 \subset \Theta(s,r,d,\gamma)$.

For any $i\neq j$, $D(P_{A^{(i)}}|| P_{A^{(j)}}) = \frac{1}{2}\fnorm{XA^{(i)} - XA^{(j)}}^2 \leq (c_1'' \kappa_+)^2 a$, and
\begin{align*}
\sqnorm{A^{(i)} - A^{(j)}} \geq c_1'' c_1' \sqrt{a}\,  r^{1/q - 1/2}.
\end{align*}
Hence, for $c_1 = c_1' c_1''$, $c_2 = c_2'/2$ and $c_3 = (c_1'' \kappa_+)^2$, $\dKL(\calF_0) \leq c_3 a$ and
\begin{align*}
\log\calM(\Theta_1, \sqnorm{\cdot}, c_1\sqrt{a}\, r^{1/q-1/2}) \geq c_2' (s-r)\log\frac{\eexp (p-r)}{s-r} \geq c_2 s\log\frac{\eexp p}{s}.
\end{align*}
Here, the second inequality holds since $s\geq 2r$ and $\frac{p-r}{s-r}\geq \frac{p}{s}$.
Moreover, by our choice of $c_3$, it is guaranteed that $c_3\leq c_2/3$.
This completes the proof of the first claim.

To establish the second claim, we note that \prettyref{lmm:scatter} continues to hold if we replace $s$ with $k$ and $W$ with $W'$.
Thus, we could essentially repeat the foregoing arguments to obtain the second claim. 
This completes the proof.
\end{proof}

\begin{proof}[Proof of \prettyref{thm:lower}]
Throughout the proof, let $c > 0$ denote a generic constant that depends only on $\gamma$ and $\kappa_+$, though its actual value might vary at different occurrences.
Note that we only need to prove the lower bounds for $\sigma = 1$, and the case of $\sigma \neq 1$ follows directly from standard scaling argument.

First, by restricting the nonzero entries of any matrix in $\Theta(s,k,r,d,\gamma)$ to the top left $s\times r$ (or $r\times k$) corner, we obtain a minimax lower bound by applying \prettyref{lmm:oracle-lowbd}, \ie,
for $\Theta = \Theta(s,r,d,\gamma)$ and any $q\in [1,2]$,
\begin{align}
\label{eq:oracle-lowbd}
\inf_{\wh{A}} \sup_{\Theta}\Expect\sqnorm{\wh{A} - A}^2 \geq c(r^{2/q-1} d^2) \wedge (r^{2/q} (s+k)).
\end{align}
Here, we have used the fact that for any $a,b,c>0$, 
\begin{align}
	\label{eq:lowbd-combine}
(a\wedge b)\vee (a\wedge c) = a\wedge (b\vee c) \asymp a\wedge (b+c).
\end{align}


Next, by \prettyref{prop:fano}, \prettyref{lmm:packing} and \eqref{eq:lowbd-combine},
we obtain
\begin{align}
\label{eq:comb-lowbd}
\inf_{\wh{A}} \sup_{\Theta}\Expect\sqnorm{\wh{A} - A}^2 \geq c (\sqrt{a} \, r^{1/q-1/2})^2 = c (r^{2/q-1} d^2)\wedge \pth{r^{2/q-1} \pth{s\log\frac{\eexp p}{s} + k\log\frac{\eexp m}{k}}}.
\end{align}
Thus, the minimax risk is lower bounded by the maximum of the lower bounds in \eqref{eq:oracle-lowbd} and \eqref{eq:comb-lowbd}.
Applying \eqref{eq:lowbd-combine} again, we complete the proof. 
\end{proof}

\subsection{A Theorem on Group Lasso}
\label{sec:glasso}

\begin{theorem}\label{thm:glasso} Consider the linear model $W=XB+Z$, 
where $W$ is an $n\times r$ response matrix, $X$ is an $n\times p$ design matrix, $B$ is a $p\times r$ coefficient matrix with $s$-sparse row support for some $s\geq 1$, and $Z$ is an $n\times r$ error matrix.
Let
\bes
\Bhat=\argmin_{B\in \R^{p\times r}} \|W -  XB\|_F^2/2+\lam\|B\|_{2,1},
\ees
with a given penalty level $\lam$. 
Let Condition \ref{cond-speigen} hold with an absolute constant $K>1$ and positive constants $s_*, c_*$ satisfying \eqref{eq:sparse-eig}.
%
\\
(i) If $2\|X_{*j}'(W-XB)\|_F\le \lam$ for all $j$,
then it holds that
\bel{eq:thmglas}
\|\Bhat-B\|_F\le \frac{3(1+(4c_*)^{-1})}{\kappa^2_-(s_*)}\sqrt{s}\lam.
\eel
(ii) Assume the error matrix $Z$ has iid $N(0,\sigma^2)$ entries. For any given $\eta\in (0,1)$, if we set
\bes
\lam\ge 2\sigma\max_j\|X_{*j}\|(\sqrt{r}+\sqrt{2\log(p/\eta)}),
\ees
then (\ref{eq:thmglas}) holds with probability at least $1-\eta$.

\end{theorem}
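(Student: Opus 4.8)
The plan is to prove part (i) as a deterministic oracle inequality for the group Lasso, in the spirit of \citet{Huang10}, and then to prove part (ii) by verifying that Gaussian noise puts us on the event assumed in part (i). For part (i), write $h=\Bhat-B$ and $S=\supp(B)$, so $|S|\le s$. Since $\Bhat$ minimizes $\fnorm{W-XB}^2/2+\lam\|B\|_{2,1}$, comparing its value with that at the true $B$ and expanding the quadratic term gives the basic inequality
\[
\tfrac12\fnorm{Xh}^2 \le \sum_{j=1}^p \Iprod{X_{*j}'(W-XB)}{h_{j*}} + \lam\big(\|B\|_{2,1}-\|\Bhat\|_{2,1}\big).
\]
Applying Cauchy--Schwarz row by row with the hypothesis $2\|X_{*j}'(W-XB)\|_F\le\lam$ bounds the first sum by $\tfrac{\lam}{2}\|h\|_{2,1}$, while $B$ vanishing off $S$ yields $\|\Bhat\|_{2,1}\ge\|B\|_{2,1}-\|h_S\|_{2,1}+\|h_{S^c}\|_{2,1}$. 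Combining gives
\[
\tfrac12\fnorm{Xh}^2 + \tfrac{\lam}{2}\|h_{S^c}\|_{2,1} \le \tfrac{3\lam}{2}\|h_S\|_{2,1},
\]
which both forces the cone condition $\|h_{S^c}\|_{2,1}\le 3\|h_S\|_{2,1}$ and gives the prediction-type bound $\fnorm{Xh}^2\le 3\lam\|h_S\|_{2,1}\le 3\lam\sqrt s\,\fnorm{h_S}$.

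The next (and main) step converts the cone condition into a restricted eigenvalue lower bound. I would sort the rows of $h$ outside $S$ in decreasing order of $\ell_2$-norm and split them into consecutive blocks $S_2,S_3,\dots$ of size $s_*-s$; the monotone rearrangement gives $\fnorm{h_{S_{k+1}}}\le(s_*-s)^{-1/2}\|h_{S_k}\|_{2,1}$, hence $\sum_{k\ge2}\fnorm{h_{S_k}}\le(s_*-s)^{-1/2}\|h_{S^c}\|_{2,1}\le 3\sqrt{s/(s_*-s)}\,\fnorm{h_S}$. Writing $S_{01}=S\cup S_2$ (so $|S_{01}|\le s_*$) and using $\fnorm{Xh}\ge\fnorm{Xh_{S_{01}}}-\sum_{k\ge3}\fnorm{Xh_{S_k}}$ together with $\fnorm{Xh_{S_{01}}}\ge\kappa_-(s_*)\fnorm{h_S}$, $\fnorm{Xh_{S_k}}\le\kappa_+(s_*)\fnorm{h_{S_k}}$, and the identity $2\Iprod{Xu}{Xv}=\fnorm{X(u+v)}^2-\fnorm{Xu}^2-\fnorm{Xv}^2$ to control the cross terms through $\kappa_+^2(s_*)-\kappa_-^2(2s_*)$, the quantitative slack $6c_*\le\sqrt{s_*/s-1}$ in \eqref{eq:sparse-eig} is exactly what makes the positive term dominate, yielding $\fnorm{Xh}\ge c\,\kappa_-(s_*)\fnorm{h_S}$ and $\fnorm{h}\le(1+(4c_*)^{-1})\fnorm{h_S}$ with explicit constants. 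Chaining $c^2\kappa_-^2(s_*)\fnorm{h_S}^2\le\fnorm{Xh}^2\le 3\lam\sqrt s\,\fnorm{h_S}$ gives $\fnorm{h_S}\lesssim\sqrt s\,\lam/\kappa_-^2(s_*)$, and then $\fnorm{h}\le(1+(4c_*)^{-1})\fnorm{h_S}$ produces \eqref{eq:thmglas}.

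For part (ii), in the model $W=XB+Z$ we have $W-XB=Z$ with i.i.d.\ $N(0,\sigma^2)$ entries, so it suffices to show the event $\{\,2\|X_{*j}'Z\|_F\le\lam\text{ for all }j\in[p]\,\}$ has probability at least $1-\eta$. For each fixed $j$, the vector $X_{*j}'Z\in\reals^r$ has i.i.d.\ $N(0,\sigma^2\|X_{*j}\|^2)$ coordinates, so $\|X_{*j}'Z\|^2/(\sigma^2\|X_{*j}\|^2)\sim\chi^2_r$, and the tail bound of \citet{Laurent00} gives $\|X_{*j}'Z\|\le\sigma\|X_{*j}\|(\sqrt r+\sqrt{2t})$ with probability at least $1-e^{-t}$. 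Taking $t=\log(p/\eta)$ and a union bound over $j\in[p]$ shows that with probability at least $1-\eta$ one has $2\|X_{*j}'Z\|\le 2\sigma\max_j\|X_{*j}\|(\sqrt r+\sqrt{2\log(p/\eta)})\le\lam$ for every $j$; on this event part (i) applies.

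I expect the sparse-Riesz step of part (i) to be the main obstacle: one must keep each occurrence of $\kappa_\pm(\cdot)$ at the correct sparsity level ($s_*$ versus $2s_*$) and check that the explicit margin furnished by \eqref{eq:sparse-eig} is precisely enough to simultaneously lower-bound $\fnorm{Xh}$ and control $\fnorm{h}$ by $\fnorm{h_S}$, so that the final constant comes out as $3(1+(4c_*)^{-1})/\kappa_-^2(s_*)$. The basic inequality, the cone computation, and the Gaussian tail argument of part (ii) are routine.
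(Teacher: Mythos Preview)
Your plan for part (ii) is exactly what the paper does: recognize that $X_{*j}'Z\in\reals^r$ has i.i.d.\ $N(0,\sigma^2\|X_{*j}\|^2)$ entries, apply the \citet{Laurent00} $\chi^2$ tail bound with $t=\log(p/\eta)$, and take a union bound over $j\in[p]$.

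For part (i) the paper takes a very different and much shorter route: it vectorizes the problem, identifies the rows of $B$ as groups of equal size $r$, and then invokes Lemma~D.4 of \citet{Huang10} as a black box. The work in the paper reduces to checking that Condition~\ref{cond-speigen} together with \eqref{eq:sparse-eig} implies the hypothesis of that lemma; once this is done, the lemma outputs $\|\Bhat-B\|_F\le 3\kappa_-^{-2}(s_*)\big(1+1.5\sqrt{s/(s_*-s)}\big)\sqrt{s}\lam$, and the numerical assumption $6c_*\le\sqrt{s_*/s-1}$ converts $1.5\sqrt{s/(s_*-s)}$ into $(4c_*)^{-1}$. So the exact constant in \eqref{eq:thmglas} is inherited from \citet{Huang10}, not produced by a fresh argument.

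Your approach---basic inequality, cone constraint, then a Cand\`es--Tao / Bickel--Ritov--Tsybakov block decomposition on $S^c$ with blocks of size $s_*-s$---is a legitimate and self-contained way to prove an oracle inequality of this form, and it is essentially how Lemma~D.4 itself is proved. The trade-off is that you must do the constant-tracking yourself. Two points to watch: first, your index range in $\sum_{k\ge2}\fnorm{h_{S_k}}\le(s_*-s)^{-1/2}\|h_{S^c}\|_{2,1}$ is off by one (the monotone-block bound controls $\sum_{k\ge3}$, not the leading block $S_2$, which you absorb into $S_{01}$); second, the assertion $\fnorm{h}\le(1+(4c_*)^{-1})\fnorm{h_S}$ does not drop out of the block argument in the form you wrote---the natural output is a bound on $\fnorm{h}$ in terms of $\fnorm{h_{S_{01}}}$, and reproducing the precise constant $3(1+(4c_*)^{-1})/\kappa_-^2(s_*)$ requires either following Huang--Zhang's bookkeeping or accepting a comparable but not identical constant. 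Neither issue is a real gap; you have correctly flagged this step as the one needing care.
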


\begin{proof}[Proof of \prettyref{thm:glasso}.]
We may rewrite the minimization problem in a vectorized version as follows
\bes
\min_{B\in \R^{p\times r}} \|\text{vec}(W) - (I_r \otimes X) \text{vec}(B)\|_2^2/2+\lam\|B\|_{2,1},
\ees
where $\text{vec}$ is usual vectorization operator and $\otimes$ is the Kronecker product as defined in \cite[Section 2.2]{muirhead}. In this case, the rows of $B$ form natural groups which are all of size $r$ and $\text{vec}(B)$ satisfies the $(s,rs)$ strong group-sparsity as defined in \cite{Huang10}. 

We are to prove the desired result by invoking Lemma D.4 of \cite{Huang10}.
To this end, we first verify that the two conditions of the lemma is satisfied.
Note that the penalty level in \cite{Huang10} corresponds to $2\lam/(nr)$ in our notion, $X_{G_j}$ corresponds to $X_{*j}$, and the sparse eigenvalues $\rho_{+}(G_j)$ and $\rho_{\pm}(rs)$ are identified as 
\bes
\rho_{+}(G_j)=\|X_{*j}\|^2/(nr),\quad
\rho_{\pm}(rs)=\kappa^2_\pm(s)/(nr).
\ees
Let $\ell= s_* -s-1$ and $\lam_-^2 = \min \{k\lam^2: kr\ge \ell r+1, k\in \integers^+\}=(\ell+1)\lam^2$. The conditions of \citet[Lemma D.4]{Huang10} can be rewritten in our notation as
\bel{eq:condlemD4} 
2\|X_{*j}'(W-XB)\|_F \le \lam 
\qquad \text{and} \qquad
\frac{\tkappa^2_+(s_*,s_*-s)}{\kappa_-^2(s_*)} \le \sqrt{\frac{\ell+1}{s}},
\eel
where 
$\tkappa^2_+(s_*,s_*-s)=\sqrt{(\kappa^2_+(s_*)-\kappa^2_-(2s_*-s))(\kappa^2_+(s_*-s)-\kappa^2_-(2s_*-s))}$.

Since by \prettyref{def:riesz}, $\kappa^2_-(s)\le \kappa^2_-(t) \le \kappa^2_+(t)\le \kappa^2_+(s),\ \forall t\leq s$, we obtain
\bes
\tkappa^2_+(s_*,s_*-s)\le \kappa^2_+(s_*)-\kappa^2_-(2s_*).
\ees
Thus, the conditions in (\ref{eq:condlemD4}) are satisfied under the assumption of \prettyref{thm:glasso}. 
Then the conclusion of \citet[Lemma D.4]{Huang10} 
leads to
\bes
\|\Bhat-B\|_F\le \frac{3}{\kappa^2_-(s_*)}(1+1.5\sqrt{s/(\ell+1)})\sqrt{s}\lam
\le \frac{3(1+(4c_*)^{-1})}{\kappa^2_-(s_*)}\sqrt{s}\lam.
\ees
This completes the proof of part (i).

Turning to part (ii), we need to upper bound $2\|X_{*j}'(W-XB)\|_F$. Since $X_{*j}'(W-XB)$ is a vector of length $r$ with iid $N(0,\sigma^2\|X_{*j}\|^2)$ entries, it follows from \citet[Eq.(4.3)]{Laurent00}
that with probability $1-\eta/p$,
\bes
\|X_{*j}'(W-XB)\|_F^2  &\le& \sigma^2\|X_{*j}\|^2
(r+2\sqrt{r\log(p/\eta)} + 2\log(p/\eta))
\cr
&\le& \sigma^2\|X_{*j}\|^2(\sqrt{r}+\sqrt{2\log(p/\eta)})^2.
\ees
With probability at least $1-\eta$, we have $2\|X_{*j}'(W-XB)\|_F\le \lam$ for all $j$ and thus (\ref{eq:thmglas}) holds. 
\end{proof}

\bibliographystyle{chicago}
\bibliography{spca}

\end{document}